\documentclass[pdflatex,sn-basic]{sn-jnl}% Basic Springer Nature Reference Style/Chemistry Reference Style
\usepackage{graphicx}%
\usepackage{multirow}%
\usepackage{amsmath,amssymb,amsfonts}%
\usepackage{amsthm}%
\usepackage{mathrsfs}%
\usepackage[title]{appendix}%
\usepackage{xcolor}%
\usepackage{textcomp}%
\usepackage{manyfoot}%
\usepackage{booktabs}%
\usepackage{algorithm}%
\usepackage{algorithmicx}%
\usepackage{algpseudocode}%
\usepackage{listings}%
\usepackage{mdframed}
\theoremstyle{thmstyleone}%
\newtheorem{theorem}{Theorem}%  meant for continuous numbers
\newtheorem{proposition}{Proposition}% to get separate numbers for theorem and proposition etc.

\newtheorem{lemma}{Lemma}

\theoremstyle{thmstyletwo}%
\newtheorem{remark}{Remark}%

\theoremstyle{thmstylethree}%
\newtheorem{definition}{Definition}%

\begin{document}

\title[Polynomial-time completion of phylogenetic tree sets]{Polynomial-time completion of phylogenetic tree sets}

%%=============================================================%%
%% GivenName	-> \fnm{Joergen W.}
%% Particle	-> \spfx{van der} -> surname prefix
%% FamilyName	-> \sur{Ploeg}
%% Suffix	-> \sfx{IV}
%% \author*[1,2]{\fnm{Joergen W.} \spfx{van der} \sur{Ploeg} 
%%  \sfx{IV}}\email{iauthor@gmail.com}
%%=============================================================%%

\author[1]{\fnm{Aleksandr} \sur{Koshkarov}}\email{Aleksandr.Koshkarov@USherbrooke.ca}

\author*[1]{\fnm{Nadia} \sur{Tahiri}}\email{Nadia.Tahiri@USerbrooke.ca}

\affil[1]{\orgdiv{Department of Computer Science}, \orgname{University of Sherbrooke}, \orgaddress{\street{boul. de l'Université}, \city{Sherbrooke}, \postcode{J1K 2R1}, \state{Quebec}, \country{Canada}}}

%%==================================%%
%% Sample for unstructured abstract %%
%%==================================%%

\abstract{Comparative analyses of phylogenetic trees typically require identical taxon sets, however, in practice, trees often include distinct but overlapping taxa. Pruning non-shared leaves discards phylogenetic signal, whereas tree completion can preserve both taxa and branch-length information. This work introduces a polynomial-time algorithm for set-wide completion of phylogenetic trees with partial taxon overlap. The proposed method identifies and extracts maximal completion subtrees that frequently appear across the source trees and constructs a weighted majority-rule consensus. Branch lengths are scaled using rates derived from common leaves. Each consensus subtree is inserted at the position that minimizes the quadratic distance error measured against information from the source trees, with candidate positions restricted to the original branches of the target tree. We demonstrate that the algorithm runs in polynomial time and preserves distances among the original taxa, yielding a unique completion that is order-independent with respect to the processing order of target trees. An experimental evaluation on amphibians, mammals, sharks, and squamates shows that the proposed method consistently achieves the lowest distance to the subset reference trees across subsets among all methods, in both topology and branch lengths.
   
An open-source Python implementation of the proposed algorithm and the biological datasets utilized in this study are publicly available at: \url{https://github.com/tahiri-lab/overlap-treeset-completion/}.
}

\keywords{Phylogenetic trees, Tree completion, Supertree, Algorithm,  Fixed-parameter tractable}

%%\pacs[JEL Classification]{D8, H51}

%%\pacs[MSC Classification]{35A01, 65L10, 65L12, 65L20, 65L70}

\maketitle

\section{Introduction}\label{sec:intro}

A phylogenetic tree is a connected acyclic graph that encodes a hypothesis about the evolutionary relationships among a set of taxa. The leaves of the phylogenetic tree are taxa, the edges represent lineages, and the internal nodes connect lineages. In a rooted tree, the root sets the direction from ancestors to descendants, internal nodes correspond to inferred most recent common ancestors, and each split represents a divergence event. A clade is any ancestor together with all of its descendants. Phylogenetic trees are typically inferred based on aligned nucleotide sequences of selected taxa, whereas other data can also be utilized for this purpose. Different tree reconstruction methods and datasets may yield different trees for the same or overlapping taxa because of statistical uncertainty, model differences, and biological processes (e.g., incomplete lineage sorting, horizontal gene transfer, and hybridization) \citep{zou2024common}. To reconcile conflicts among trees, one or more summary trees can be constructed from the input trees. Two well-known approaches for this purpose are consensus trees (same leaf set) and supertrees (overlapping leaf sets).

Consensus methods synthesize similarities among phylogenetic trees that share identical taxon sets. There are various well-established methods for constructing consensus trees. Basic approaches include, but are not limited to, the following. The strict consensus method \citep{mcmorris1983view} retains only clades present in all input trees, often resulting in unresolved polytomies if conflicts exist among the input trees. The majority-rule consensus \citep{dong2010majority,margush1981consensusn} is a more flexible approach that includes clades present in more than 50\% of the input trees. Adams consensus \citep{adams1972consensus} focuses on preserving ancestral-descendant relationships, whereas the greedy consensus \citep{bryant2003classification,degnan2009properties} prioritizes frequently occurring clades. Other popular consensus methods are loose consensus \citep{jansson2016improved}, frequency difference consensus \citep{goloboff2003improvements,jansson2024faster}, local consensus \citep{jansson2016minimal,kannan1998computing}, and R* consensus \citep{bryant2003classification,jansson2016faster}.

Supertree construction involves the integration of all trees within a given set, consisting of overlapping trees, into a supertree. Various methods have been developed to handle overlapping taxa and resolve conflicts between input trees. Some widely used supertree methods include matrix representation approaches, such as matrix representation with parsimony \citep{baum1992combining,ragan1992phylogenetic}, matrix representation with likelihood \citep{akanni2015implementing,nguyen2012mrl,steel2000parsimony}, and matrix representation with compatibility \citep{pisani2002matrix,purvis1995modification}. Other notable techniques include average consensus \citep{lapointe1997average}, SuperFine \citep{swenson2012superfine}, most similar supertree \citep{creevey2004does}, FastRFS \citep{vachaspati2017fastrfs}, majority-rule supertrees \citep{kupczok2011split}, and MinCut supertree methods \citep{page2002modified,semple2000supertree}. These methods integrate information from overlapping trees, considering differences in taxon sets and resolving conflicts. Additionally, quartet-based supertree methods \citep{piaggio2004quartet} construct supertrees by combining smaller groups of four-taxon trees (quartets) into a larger tree. These methods are particularly applicable for datasets with sparse overlap, where only limited taxon information is shared across trees. More recent techniques such as the spectral cluster supertree \citep{mcarthur2024spectral} have incorporated spectral clustering approaches, improving both the scalability and accuracy of supertree construction.

Although individual methods exist for consensus trees, supertree construction, and tree completion \citep{bansal2020linear,koshkarov2024novel,li2024comparison}, it is important to note that the set-wide completion of collections of overlapping phylogenetic trees has not yet been explicitly addressed in the scientific literature. To the best of our knowledge, this represents a gap in the literature. This study addresses this gap by proposing a novel approach to phylogenetic tree set completion.

\textbf{Problem statement.} Let $\mathcal{T}=\{T_1,\dots,T_n\}$ be a collection of rooted binary phylogenetic trees with strictly positive branch lengths and overlapping leaf sets, and let $L(\mathcal{T})=\bigcup_{i=1}^n L(T_i)$. For each target tree $T_i$, the goal is to construct a completed tree $T_i^{\uplus}$ on $L(\mathcal{T})$ such that the induced subtree on the original taxa is preserved. A trivial strategy that attaches missing taxa arbitrarily ignores consistent topological patterns and branch length information present in the source trees and can produce completions that are incompatible with relationships frequently observed across the tree set.

To address this problem, each insertion is chosen by minimizing a quadratic discrepancy objective that compares distances from common leaves in $T_i$ to a candidate attachment point with the corresponding distances inferred from the source trees. The minimum is taken over all positions along the original branches of $T_i$. The formal expression of this objective function and its minimization are given in Equation~\ref{eq:obj_func} and Section~\ref{sec:methods}. The procedure is applied iteratively until all missing taxa are inserted, while the original topology on $L(T_i)$ remains unchanged.

\textbf{Our contributions.} This paper makes several key contributions that advance the state of phylogenetic tree set completion. (1) We define the phylogenetic tree set completion problem for collections of trees with overlapping taxa and introduce maximal completion subtrees and their consensus versions as the core structures that support phylogenetic tree set completion. To our knowledge, existing work does not address the problem of completing sets of phylogenetic trees for collections of trees with different yet overlapping sets of taxa. (2) We develop an iterative, locally optimal subtree selection strategy that maintains the set of uncovered taxa, extracts candidates whose leaves lie in this set, filters them by cross-tree frequency, groups candidates by identical leaf sets, and chooses the group that maximizes coverage at each step. (3) We extend classical consensus tree constructions by weighting splits according to the overlap between each source and the current target tree. These weights determine both majority-rule topology and branch lengths via weighted averaging. (4) We adjust and extend the tree completion algorithm based on common leaves introduced in \citep{koshkarov2024novel} from pairwise completion to a set-wide completion strategy. (5) We analyze the computational core of the subtree coverage step via parameterized complexity. By formulating the optimal coverage variant as an exact-cover problem, we show NP-completeness and prove fixed-parameter tractability when parameterized by the number of selected subtrees and the maximum candidate subtree size. (6) We propose an optional deterministic multifurcation resolution that refines completed trees by inserting only zero-length internal branches, producing outputs without multifurcations while preserving the leaf set and all pairwise leaf-to-leaf distances. (7) Finally, we provide an experimental evaluation on biological datasets (amphibians, mammals, sharks, and squamates), comparing against baseline methods using the normalized Robinson-Foulds distance and the normalized Branch Score Distance relative to subset reference trees. Our approach yields a unique completion for each input tree and is order-independent with respect to the processing order of target trees. We also provide a polynomial-time complexity analysis for the entire phylogenetic tree set completion algorithm.

This paper is organized as follows. Section~\ref{sec:notation} introduces notation and preliminaries. Section~\ref{sec:methods} describes the proposed algorithm, detailing the selection of maximal completion subtrees, the construction of consensus subtrees, and their integration into input trees. Section~\ref{sec:properties} outlines several theoretical properties of the proposed phylogenetic tree set completion algorithm, including its local optimality, correctness, order invariance, and polynomial-time bounds. Subsection~\ref{sec:param} is devoted to a parameterized complexity analysis of the subtree selection step. Subsection~\ref{sec:res-multifurc} describes an optional post-processing refinement that deterministically resolves multifurcations in completed trees. Section~\ref{sec:experiments} presents the experimental evaluation of the proposed algorithm on biological data of amphibians, mammals, sharks, and squamates. The conclusion summarizes the findings and outlines potential directions for future research.

\section{Notation and preliminaries}\label{sec:notation}

Given a rooted binary phylogenetic tree $T_i$, let $L(T_i)$ denote its set of taxa (leaves), $V(T_i)$ denote its set of nodes, $E(T_i)$ denote its set of branches (edges), and $|e|$ denote the length of a branch $e=(u,w) \in E(T_i)$. Let $\mathcal{T}=\{T_1,\dots,T_n\}$ be a set of phylogenetic trees, where each tree is defined on an overlapping subset of taxa and all input tree branch lengths are strictly positive. The union of all taxa across $\mathcal{T}$ is denoted by $L(\mathcal{T}) = \bigcup_{T_i \in \mathcal{T}} L(T_i)$. A \textit{target tree} $T_i \in \mathcal{T}$ is the specific tree under consideration for completion. The remaining trees in $\mathcal{T} \setminus \{T_i\}$ are referred to as \textit{source trees}. Let $L^{\uplus}(T_i) = L(\mathcal T)\setminus L(T_i)$ denote the taxa missing from $T_i$. Let $r_{T_i}$ denote the root of $T_i$. The terms \textit{phylogenetic tree} and \textit{tree} are used interchangeably in this article. We use $\mathbf{1}_{\{\cdot\}}$ for indicator functions.

\begin{definition}[Overlapping phylogenetic tree set]
An overlapping phylogenetic tree set $\mathcal{T}$ is a collection of phylogenetic trees $\{T_1, T_2, \dots, T_n\}$, where each tree $T_i$ has a leaf set $L(T_i) \subseteq L(\mathcal{T})$, and there exists at least one leaf common to all trees in $\mathcal{T}$. Formally, the set of common leaves is defined as $CL(\mathcal{T}) = \bigcap_{T_i \in \mathcal{T}} L(T_i)$, and it is required that $CL(\mathcal{T}) \neq \emptyset$.
More generally, for a tree $T_i \in \mathcal{T}$ and a subset $\mathcal{T}' \subseteq \mathcal{T}\setminus\{T_i\}$, the set of common leaves between $T_i$ and $\mathcal{T}'$ is defined as $CL(T_i,\mathcal{T}') = L(T_i) \cap \bigcap_{T \in \mathcal{T}'} L(T)$.
\end{definition}

Throughout this paper, we assume a stronger condition in which each input overlapping phylogenetic tree set $\mathcal{T}$ has at least two leaves in common with all trees. Thus, we specifically require that $|CL(\mathcal{T})| \geq 2$ because the proposed tree completion method uses branch length scaling and determines attachment points from pairwise relationships among common leaves, which are not defined if there is only one common leaf.

\begin{definition}[Distinct-leaf subtree]
A distinct-leaf subtree for a target tree $T_i$ is a rooted subtree $S$ extracted from a source tree $T_j \in \mathcal{T} \setminus \{T_i\}$, such that $L(S)$ contains only leaves missing from $T_i$, i.e., $L(S) \subseteq L(\mathcal{T}) \setminus L(T_i)$. A subtree $S \subseteq T_j$ refers to the rooted subtree induced by a node of $T_j$ and all its descendants (an internal node if $|L(S)| > 1$, or a single leaf if $|L(S)| = 1$).
\end{definition}

\begin{definition}[Maximal completion subtree] \label{def:mcs}
Let $T_i \in \mathcal{T}$ be a target tree. Let $p \in (0,1]$ be the current frequency threshold, and let $U \subseteq L(\mathcal{T}) \setminus L(T_i)$ be the current uncovered set. A distinct-leaf subtree $S$ for $T_i$ with leaf set $L(S)\subseteq U$ is a maximal completion subtree (MCS) at the current threshold $p$ (relative to $U$) if $\bigl|\{\,T_j \in \mathcal{T}\setminus\{T_i\}\ \mid\ \exists\, S' \subseteq T_j,\; L(S')=L(S)\,\}\bigr| \ge p \cdot \bigl|\mathcal{T}\setminus\{T_i\}\bigr|$ and for every distinct-leaf subtree $S''$ for $T_i$ with $L(S'')\subseteq U$ that satisfies $\bigl|\{\,T_j \in \mathcal{T}\setminus\{T_i\}\ \mid\ \exists\, S' \subseteq T_j,\; L(S')=L(S'')\,\}\bigr| \ge p \cdot \bigl|\mathcal{T}\setminus\{T_i\}\bigr|$, the inequality $|L(S)| \ge |L(S'')|$ holds. Thus, maximality is defined with respect to maximum leaf coverage among the distinct-leaf subtrees satisfying the frequency condition within the current uncovered set $U$ at the current threshold $p$.
\end{definition}

The \textit{root node} of the subtree $S$ in $T_j$, denoted by $r_S$, is defined as the lowest common ancestor (LCA) of the leaves $L(S)$ if $|L(S)|>1$, or the single leaf in $L(S)$ if $|L(S)|=1$. The \textit{connecting branch} of $S$ is the branch that connects $r_S$ to its parent node in $T_j$. The parent node of $r_S$ in $T_j$ is referred to as the \textit{attachment node} of $S$ and is denoted by $att(S)$.

\begin{definition}[Completed tree]
For a target tree $T_i \in \mathcal{T}$, a completed tree $T_i^{\uplus}$ is a rooted phylogenetic tree (not necessarily binary) with $L(T_i^{\uplus}) = L(\mathcal{T})$ such that $(T_i^{\uplus})|_{L(T_i)} \cong T_i$ as rooted leaf-labeled trees.
\end{definition}

\begin{definition}[Distance between two nodes]
The distance between any two nodes $u$ and $w$ in a phylogenetic tree $T_i$, denoted as $d^{(T_i)}(u, w)$, is the sum of the branch lengths along the unique path connecting $u$ and $w$ in $T_i$.
\end{definition}

\section{Methods} \label{sec:methods}

The phylogenetic tree set completion method completes each target tree $T_i \in \mathcal{T}$ by iteratively extracting maximal completion subtrees from overlapping source trees, constructing consensus MCS, and inserting them into $T_i$. The resulting completed tree $T_i^{\uplus}$ contains all taxa in $L(\mathcal{T})$, preserving both topological and branch length information.

\subsection{Phylogenetic tree set completion algorithm}

At a high level, the proposed method completes each target tree $T_i$ based on a fixed set of source trees. It maintains the set $U$ of taxa missing from $T_i$ and proceeds in iterations. In each iteration, it (i) identifies a group of maximal completion subtrees whose leaves lie in $U$ and are supported across source trees, using an adaptive frequency threshold, (ii) summarizes that group into a consensus MCS via a weighted majority-rule procedure that favors sources with greater overlap with $T_i$, and (iii) inserts the consensus subtree into $T_i$ at an attachment point chosen by minimizing a quadratic distance discrepancy objective function. The uncovered set $U$ is updated, and the process repeats until all missing taxa are placed. Applying this procedure to every tree yields the completed collection $\mathcal{T}^{\uplus}$.

The insertion of missing taxa $L^{\uplus}(T_i)$ into $T_i$ is based on the identification of MCS within the collection $\mathcal{T} \setminus \{T_i\}$. The approach proceeds as follows (see also Figure~\ref{fig:illust}).

\paragraph{Selection of MCS.} The iterative selection of MCS includes the following steps. A dynamic set of uncovered taxa, denoted as $U$, is initially set to $L^{\uplus}(T_i)$. In each iteration $t$, candidate subtrees $S\subseteq T_j$ with $L(S)\subseteq U$ are extracted from each $T_j\in\mathcal{T}\setminus\{T_i\}$. This is done by identifying nodes in $T_j$ whose descendant leaves are entirely contained in $U$. The corresponding induced rooted subtrees are then collected into the set of candidate subtrees $\mathcal{S}_t$.

Subsequently, each subtree $S \in \mathcal{S}_t$ is assigned a frequency $P(L(S))$, which quantifies the occurrence of subtrees with the same set of leaves in $\mathcal{T}\setminus\{T_i\}$ (Equation~\ref{eq:proportion_subtrees}).
\begin{equation}
   P(L(S)) = \frac{\bigl|\{T_j \in \mathcal{T}\setminus\{T_i\}: \;\exists\, S'\subseteq T_j, \; L(S')=L(S)\, \}\bigr|}{|\mathcal{T}\setminus\{T_i\}|}. 
   \label{eq:proportion_subtrees}
\end{equation}

In Equation~\ref{eq:proportion_subtrees}, each source tree contributes at most one to the numerator for a fixed leaf set, even if that tree contains multiple candidate subtrees with the same leaf set.

Subtrees with a frequency $P(L(S)) < p$ are excluded in this step. The threshold is initialized at $p=0.5$, corresponding to a majority-rule setting. The retained subtrees are $\mathcal{S}'_t = \{\,S \in \mathcal{S}_t: P(L(S)) \ge p\}$.

This filtering step prioritizes subtrees that appear in at least a fraction $p$ of the source trees. If no candidate subtrees satisfy the frequency threshold $p$, a fallback mechanism is applied. In this case, $p$ is decremented in fixed steps of 0.05 until at least one candidate passes the filter. This adaptive relaxation ensures that rare but necessary taxa are covered, with priority still given to highly supported subtrees.

The subtrees in $\mathcal{S}'_t$ that satisfy the threshold criterion are then grouped according to their sets of leaves, such that $\mathbb{G}_t = \{ G \subseteq \mathcal{S}'_t: \forall S,\,S' \in G,\; L(S) = L(S') \}$, where each set $G \in \mathbb{G}_t$ is a subset of $\mathcal{S}'_t$ containing all subtrees with an identical set of leaves.

Let $L(G)$ denote the common leaf set of all subtrees in group $G \in \mathbb{G}_t$. Although each subtree $S \in G$ satisfies $L(S) \subseteq U$ by construction, the uncovered taxa $U$ may appear in multiple groups. The group that maximizes the leaf coverage is selected, that is, choose $G'_t \in \underset{G \in \mathbb{G}_t}{\mathrm{argmax}}\bigl|L(G) \bigr|$ and set $\mathcal{M}_t \gets G'_t$. The selected group $\mathcal{M}_t$ corresponds to a maximal completion subtree at the current threshold $p$ relative to $U$ in the sense of Definition~\ref{def:mcs}. Since all subtrees in $\mathcal{M}_t$ have the same leaf set, $L(\mathcal{M}_t)$ denotes this common leaf set, equal to $L(G'_t)$. In cases where multiple groups achieve the same maximum coverage, lexicographic ordering of leaf sets is used to ensure consistency and invariance.

The uncovered taxa set $U$ is updated as follows (Equation~\ref{eq:u_update}).
\begin{equation}
    U \gets U \setminus L(\mathcal{M}_t). \label{eq:u_update}
\end{equation}

The construction of maximal completion subtrees continues until $U = \emptyset$.

\begin{figure}
    \centering
    \includegraphics[width=1\linewidth]{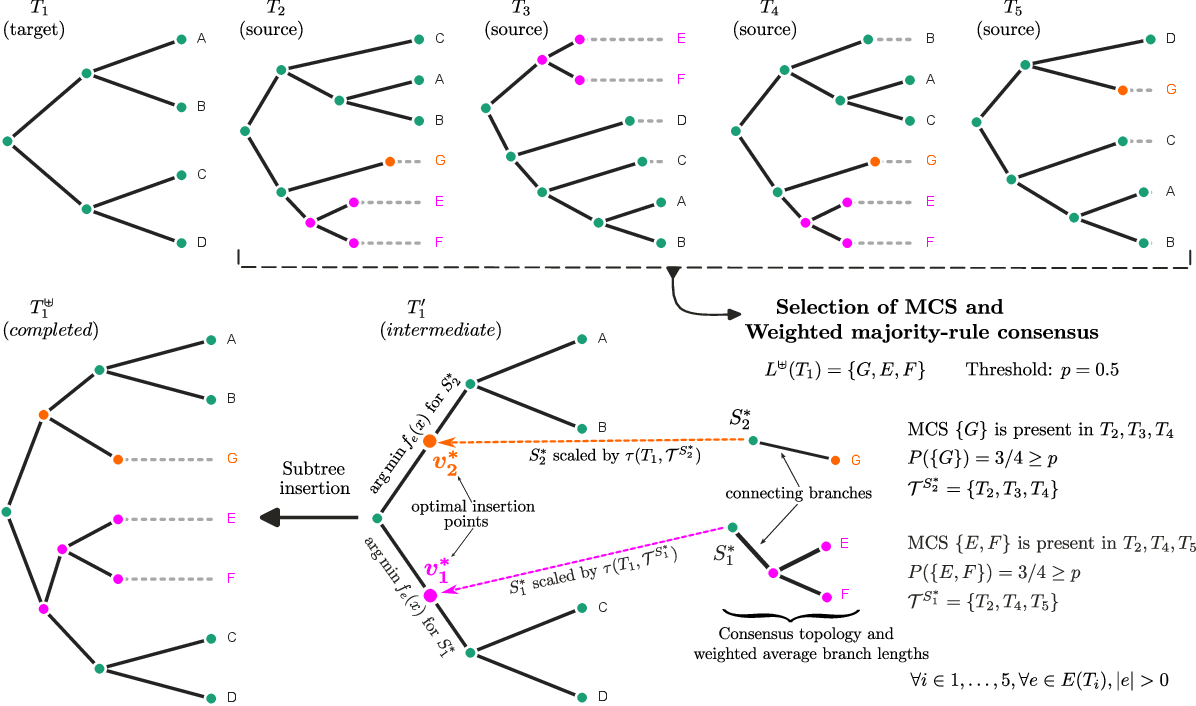}
    \caption{Schematic illustration of phylogenetic tree set completion on five input trees (i.e., trees $T_1$ to $T_5$). Tree $T_1$ is the target tree and trees $\{T_2,\dots,T_5\}$ are source trees, with $L^{\uplus}(T_1)=\{E,F,G\}$. All input trees have strictly positive branch lengths, i.e., $\forall i\in\{1,\dots,5\},\ \forall e\in E(T_i),\ |e|>0$. MCS with leaf sets $\{E,F\}$ and $\{G\}$ are identified in 3/4 source trees, satisfying the threshold $p=0.5$. For each MCS leaf set, a weighted majority-rule consensus subtree ($S_1^*$ and $S_2^*$) is constructed with weighted average branch lengths. In the working copy $T_1'$, optimal insertion points $v_1^*$ and $v_2^*$ are selected by minimizing the objective function $f_e(x)$ (Equation~\ref{eq:obj_func}). The consensus subtrees are then scaled by the corresponding adjustment rate (Equation~\ref{eq:adj_rate}) and inserted at $v_1^*, v_2^*$ to obtain the completed tree $T_1^{\uplus}$. Repeating this procedure for each target tree results in $\mathcal{T}^{\uplus}$.}
    \label{fig:illust}
\end{figure}

\paragraph{Weighted majority-rule consensus.} For each set $\mathcal{M}_t$ of maximal completion subtrees selected at iteration $t$, a consensus maximal completion subtree, denoted by $S^*$, is constructed by applying a weighted majority-rule procedure. Define the contributing source tree index set as $\mathcal{I}_t = \bigl\{ j \in \{1,\dots,n\}\setminus\{i\}:\exists\, S_j \subseteq T_j, \; L(S_j) = L(\mathcal{M}_t) \bigr\}$. For a fixed leaf set in a rooted tree, such a subtree $S_j$, rooted at the LCA of $L(\mathcal{M}_t)$, is unique if it exists, and $\mathcal{M}_t = \{S_j : j\in\mathcal{I}_t\}$ (each $S_j \in \mathcal{M}_t$ originates from a source tree $T_j \in \mathcal{T} \setminus \{T_i\}$).

In order to prioritize the source trees with greater overlap with $T_i$, a weight $w_j$ is assigned to each source tree $T_j$ (Equation~\ref{eq:weights}), with $w_j>0$ for all $j\in\mathcal{I}_t$ under the assumption $|CL(\mathcal{T})|\ge 2$.
\begin{equation}
    w_j = \frac{|L(T_j) \cap L(T_i)|}{|L(T_i)|}. \label{eq:weights}
\end{equation}

Each subtree $S_j$ determines a set of splits $\mathcal{SP}_j$, which represent all bipartitions of the leaf set $L(\mathcal{M}_t)$ induced by the internal branches of $S_j$. Specifically, removing an internal branch from $S_j$ disconnects it into two subtrees, and the corresponding split is the pair of leaf sets on either side of that branch. Each such split represents a grouping of taxa that share a common ancestor. Since $S_j$ is derived from $T_j$, each split $sp \in \mathcal{SP}_j$ is associated with the same weight $w_j$ as its corresponding source tree $T_j$. For $sp\in \bigcup_{j\in\mathcal{I}_t}\mathcal{SP}_j$, let $e_j(sp)$ denote the internal branch of $S_j$ that induces $sp$ (defined only if $sp\in\mathcal{SP}_j$), and let $|e_j(sp)|$ be the length of that branch in $S_j$.
The weighted frequency of $sp$, i.e., the sum of the weights of subtrees in which it occurs divided by the total weight of all subtrees, is computed by Equation~\ref{eq:freq_split}.
\begin{equation}
     f(sp) = \frac{\sum\limits_{j\in\mathcal{I}_t} w_j \cdot \mathbf{1}_{\{sp\in\mathcal{SP}_j\}}}{\sum\limits_{j\in\mathcal{I}_t} w_j}.\label{eq:freq_split}
\end{equation}

If $|L(\mathcal{M}_t)| > 1$, a rooted majority-rule consensus subtree $S^*$ with branch lengths is constructed by including all splits $sp$ with $f(sp) > 0.5$. For each split $sp$, the corresponding branch $e^*(sp)$ in $S^*$ is assigned a length $|e^*(sp)|$, computed as the weighted average of the branch lengths in the contributing source trees (Equation~\ref{eq:cons_length}).
\begin{equation}
   |e^*(sp)| = \frac{\sum\limits_{j\in\mathcal{I}_t} w_j\cdot \mathbf{1}_{\{sp\in \mathcal{SP}_j\}}\, |e_j(sp)|}{\sum\limits_{j\in\mathcal{I}_t} w_j\cdot \mathbf{1}_{\{sp\in \mathcal{SP}_j\}}}.
   \label{eq:cons_length}
\end{equation}

To finalize the construction of $S^*$, the connecting branch from the attachment placeholder (node) to the root $r_{S^*}$ is included as an integral component, with its length computed using Equation~\ref{eq:cons_length}. In this case, the branch lengths are taken as $|(att(S_j), r_{S_j})|$, with the indicator function equal to 1 for each $j \in \mathcal{I}_t$ where this branch exists, and 0 otherwise.

If $|L(\mathcal{M}_t)|=1$, the induced split sets $\mathcal{SP}_j$ are empty, and the consensus step returns a single-leaf subtree. In this case, split processing is skipped, and only the connecting branch is included, with its length computed by Equation~\ref{eq:cons_length}.

\paragraph{Subtree insertion.} Once each consensus maximal completion subtree $S^*$ is constructed, it is inserted into the target tree $T_i$. This integration follows an updated version of the tree completion algorithm, which is based on the common leaves approach \citep{koshkarov2024novel}. This tree completion method ensures that the induced subtree on $L(T_i)$ and all pairwise distances among leaves in $L(T_i)$ remain unchanged. The completion process is described as follows.

The insertion is performed on an intermediate tree $T'_i$, a copy of the target tree $T_i$ that is incrementally updated as consensus subtrees are inserted. It represents the current state of the tree before inserting the next consensus maximal completion subtree $S^*$. With $T_i$ left unchanged, all weight and distance scaling calculations always refer to the original topology, while $T'_i$ accumulates only the state of prior insertions. This separation makes it possible to place each new subtree $S^*$ relative to the current partial completion without altering $T_i$, and guarantees that subsequent MCS extractions are always performed on the same unmodified input trees.

Let $\mathcal{T}^{S^*}$ denote the set of source trees that contributed to the construction of $S^*$.

The insertion process begins by identifying the common leaves $CL(T_i, \mathcal{T}^{S^*})$, which are used to compute distance scaling factors. Formally, for a target tree $T_i$ and a set of source trees $\mathcal{T}^{S^*}$, the set of their common leaves is $CL(T_i, \mathcal{T}^{S^*})= \bigl\{l\in L(T_i):l \in L(T_j) \; \forall T_j\in\mathcal{T}^{S^*}\bigr\}$. To ensure consistency in distance scaling, a branch adjustment rate, denoted $\tau(T_i, \mathcal{T}^{S^*})$, is defined as the ratio of the sum of pairwise distances between common leaves in the target tree $T_i$ to the corresponding mean pairwise distances in $\mathcal{T}^{S^*}$ (Equation~\ref{eq:adj_rate}).

\begin{equation}
\tau(T_i, \mathcal{T}^{S^*}) = \frac{\sum\limits_{\{l_a,l_b\}\subseteq CL(T_i,\mathcal{T}^{S^*})} d^{(T_i)}(l_a, l_b)}{\sum\limits_{\{l_a,l_b\}\subseteq CL(T_i,\mathcal{T}^{S^*})} \bar{d}^{(\mathcal{T}^{S^*})}(l_a, l_b)}, \label{eq:adj_rate}
\end{equation}

where $\bar{d}^{(\mathcal{T}^{S^*})}(l_a, l_b)$ denotes the mean distance between leaves $l_a$ and $l_b$ across the trees in $\mathcal{T}^{S^*}$.

All branch lengths in $S^*$, including the connecting branch, are scaled by this factor (Equation~\ref{eq:new_length}).
\begin{equation}
\forall e \in E(S^*), \; |e| \gets \tau(T_i, \mathcal{T}^{S^*}) \cdot |e|. \label{eq:new_length}
\end{equation}

To determine the optimal insertion point $v^*$, the distances from each common leaf $l_c \in CL(T_i, \mathcal{T}^{S^*})$ to a candidate point $v$ in the intermediate tree $T'_i$ are evaluated and compared with a target distance derived from the contributing trees. For the consensus subtree $S^*$, let $att(S^*)$ be an attachment placeholder induced by the contributing source subtrees. For each $T_j \in \mathcal{T}^{S^*}$, the corresponding attachment node is $att(S_j)$, where $S_j \subseteq T_j$ is the rooted subtree at the LCA of $L(S^*)$ in $T_j$ and satisfies $L(S_j)=L(S^*)$.

For each $l_c\in CL(T_i,\mathcal{T}^{S^*})$, define the leaf-based adjustment rate
$\tau^{(l_c)}(T_i, \mathcal{T}^{S^*})$ by Equation~\ref{eq:leaf_adj_rate}.
\begin{equation}
\tau^{(l_c)}(T_i, \mathcal{T}^{S^*}) =
\frac{\sum\limits_{l_a \in CL(T_i,\mathcal{T}^{S^*})\setminus\{l_c\}} d^{(T_i)}(l_c, l_a)}
{\sum\limits_{l_a \in CL(T_i,\mathcal{T}^{S^*})\setminus\{l_c\}} \bar{d}^{(\mathcal{T}^{S^*})}(l_c, l_a)}.
\label{eq:leaf_adj_rate}
\end{equation}

The target distance is then defined by Equation~\ref{eq:temp_dist}.
\begin{equation}
d^{(\mathcal{T}^{S^*})}(l_c, S^*) =
\bar{d}^{(\mathcal{T}^{S^*})}(l_c, att(S^*)) \cdot \tau^{(l_c)}(T_i, \mathcal{T}^{S^*}),
\label{eq:temp_dist}
\end{equation}
where $\bar{d}^{(\mathcal{T}^{S^*})}(l_c, att(S^*))$ denotes $\frac{1}{|\mathcal{T}^{S^*}|}\sum_{T_j\in\mathcal{T}^{S^*}} d^{(T_j)}(l_c, att(S_j))$, and $S_j\subseteq T_j$ is the contributing subtree with $L(S_j)=L(S^*)$.

Candidate insertion positions are evaluated along each branch of the intermediate tree $T'_i$. However, only original branches $E(T_i)$ are considered. Even after prior insertions split an original branch $e=(u,w)$ into subsegments in $T'_i$, all such subsegments are treated as parts of the same original branch $e$ for candidate evaluation. Branches introduced by earlier insertions are never considered as candidate branches.

Fix an original branch $e=(u,w)\in E(T_i)$ of length $|e|$, where $u$ is the parent of $w$. For $x\in[0,1)$, let $v_e(x)$ be the point on $e$ at distance $x\cdot |e|$ from $u$ measured along $e$ toward $w$. The value $x=1$ is excluded to avoid representing the same node both as $x=1$ on one original branch and as $x=0$ on an adjacent original branch. If the minimizer on $e$ is attained at $w$, that insertion point is represented on an adjacent original branch incident to $w$ using $x=0$. Moving the insertion point $v_e(x)$ along the branch $e$ affects the leaves on the two sides of $e$ in opposite ways. For each common leaf $l_c\in CL(T_i,\mathcal{T}^{S^*})$, the distance to $v_e(x)$ can be written as $d^{(T'_i)}(l_c,u)+\varepsilon_{c,e}\,x\cdot |e|$, where $\varepsilon_{c,e}=+1$ if $l_c$ lies outside the subtree rooted at $w$ and $\varepsilon_{c,e}=-1$ if $l_c$ lies inside the subtree rooted at $w$.

The objective function is defined as the total squared deviation between the observed and target distances as follows (Equation~\ref{eq:obj_func}).
\begin{equation}
f_e(x) = \sum_{l_c \in CL(T_i, \mathcal{T}^{S^*})} \left( d^{(T'_i)}(l_c, u) + \varepsilon_{c,e} \cdot x \cdot |e| - d^{(\mathcal{T}^{S^*})}(l_c, S^*) \right)^2. \label{eq:obj_func}
\end{equation}

Since $f_e(x)$ is quadratic in $x$, for each fixed original branch $e\in E(T_i)$, its minimizer can be computed by
solving $\frac{d}{dx} f_e(x)=0$ to obtain the unconstrained minimizer and then projecting it to $[0,1)$, if necessary,
which results in $x_e^*\in[0,1)$.
The optimal insertion point is then $v^* = v_{e^*}(x_{e^*}^*)$, where $\displaystyle (e^*,x_{e^*}^*) \in \arg\min_{e\in E(T_i),x\in[0,1)} f_e(x)$. If multiple candidates attain the same minimal objective value, the one minimizing $d^{(T'_i)}(r_{T_i},v_e(x_e^*))$ is selected. If a tie remains, the candidate on the branch with the smallest fixed index
$\mathrm{idx}(e)$, given by a depth-first enumeration of the original branches $E(T_i)$, is selected. This results in a unique $v^*$.

Once $v^*$ is determined, $S^*$ is inserted at this position in $T'_i$, ensuring that discrepancies between observed and
target distances are minimized. The integration preserves the original topology of $T_i$ and maintains pairwise
relationships among the original taxa.

The insertion procedure is iterated over all consensus MCS until $L(T'_i) = L(T_i^{\uplus}) = L(\mathcal{T})$, which results in the inclusion of all missing taxa. The completed tree $T_i^{\uplus}$ is added to a newly constructed set of completed trees $\mathcal{T}^{\uplus}$ while leaving the original $T_i$ unchanged. The described process is repeated for each tree in $\mathcal{T}$, resulting in a fully completed phylogenetic tree set $\mathcal{T}^{\uplus}$.

\section{Properties}\label{sec:properties}

The following theorems and lemmas establish several properties of the proposed approach.

\begin{theorem}[Local optimality] \label{theorem:optimal}
Let $U_{t-1}$ be the set of taxa still uncovered before iteration $t$, let $\mathcal{S}_t$ be the candidate subtrees at that iteration, and let $p$ be the current threshold used in iteration $t$.
The selected subtrees $\mathcal{M}_t$ satisfy the following property.
For every $S \in \mathcal{S}_t$ whose leaf set satisfies $\bigl|\{\,T_j \in \mathcal{T}\setminus\{T_i\}\ \mid\ \exists\, S' \subseteq T_j,\ L(S')=L(S)\,\}\bigr|
\ge p \cdot \bigl|\mathcal{T}\setminus\{T_i\}\bigr|$, the inequality $\bigl|L(S)\cap U_{t-1}\bigr| \le \bigl|L(\mathcal{M}_t)\cap U_{t-1}\bigr|$ holds.
\end{theorem}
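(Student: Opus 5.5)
The proof will unfold the definitions of the candidate set $\mathcal{S}_t$, the filtered set $\mathcal{S}'_t$, the grouping $\mathbb{G}_t$, and the argmax selection of $\mathcal{M}_t$. Two facts do most of the work. First, every candidate $S\in\mathcal{S}_t$ satisfies $L(S)\subseteq U_{t-1}$ by construction, since candidates are exactly the subtrees rooted at nodes whose descendant leaves all lie in the current uncovered set. Hence $|L(S)\cap U_{t-1}|=|L(S)|$. Second, $\mathcal{M}_t$ is itself a group of candidates, so likewise $|L(\mathcal{M}_t)\cap U_{t-1}|=|L(\mathcal{M}_t)|$. The claimed inequality therefore reduces to $|L(S)|\le|L(\mathcal{M}_t)|$ for every frequency-admissible candidate $S$.

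Next, fix such an $S$. The hypothesis $|\{T_j: \exists S'\subseteq T_j,\ L(S')=L(S)\}|\ge p\cdot|\mathcal{T}\setminus\{T_i\}|$ is, after dividing by $|\mathcal{T}\setminus\{T_i\}|$, exactly $P(L(S))\ge p$ in the sense of Equation~\ref{eq:proportion_subtrees}. So $S\in\mathcal{S}'_t$. Because $\mathbb{G}_t$ partitions $\mathcal{S}'_t$ by leaf set, there is a group $G\in\mathbb{G}_t$ with $S\in G$ and $L(G)=L(S)$. Since $\mathcal{M}_t=G'_t\in\operatorname{argmax}_{G\in\mathbb{G}_t}|L(G)|$, we get $|L(S)|=|L(G)|\le|L(G'_t)|=|L(\mathcal{M}_t)|$. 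The lexicographic tie-breaking only chooses among maximizers, so it does not affect this bound. Combining this with the first paragraph gives the statement.

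The main obstacle is pinning down which threshold $p$ is meant. The theorem uses ``the current threshold used in iteration $t$'', namely the value after any fallback decrements, at which $\mathcal{S}'_t\neq\emptyset$. I will state explicitly that $\mathcal{S}'_t$ and $\mathbb{G}_t$ are formed with this final $p$, so the filter used in the argmax matches the hypothesis of the theorem. Two further points need a brief remark. Nonemptiness of $\mathcal{S}'_t$ is guaranteed because $U_{t-1}\neq\emptyset$: any leaf in $U_{t-1}$ lies in some source tree and is a single-leaf candidate, so some leaf set has positive frequency and the decrements eventually admit it. Also, $\mathcal{S}'_t$ is computed relative to $U_{t-1}$, not the updated $U$, so no taxon from an earlier iteration can enter the comparison.
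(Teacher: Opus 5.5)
Your proposal is correct and follows essentially the same route as the paper: both use $L(S)\subseteq U_{t-1}$ for every candidate to reduce the claim to $|L(S)|\le|L(\mathcal{M}_t)|$, then place $S$ in its leaf-set group in $\mathbb{G}_t$ and apply the argmax selection. You are a bit more explicit than the paper in identifying the hypothesis with $P(L(S))\ge p$ (so $S\in\mathcal{S}'_t$) and in taking $p$ to be the threshold after any fallback decrements; your nonemptiness remark is not needed, since the statement already presupposes that $\mathcal{M}_t$ has been selected.
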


\begin{proof}
Let $T_i$ be a target tree, and let $U_0=L^{\uplus}(T_i)$ denote the initial set of uncovered taxa in $T_i$. At iteration $t$, let $U_{t-1}$ be the taxa still uncovered, and let $\mathcal{S}_t$ be the set of candidate subtrees. By construction, every subtree $S \in \mathcal{S}_t$ satisfies $L(S)\subseteq U_{t-1}$.

Subtrees in $\mathcal{S}_t$ are filtered by the current threshold $p$, and only subtrees in $\mathcal{S}'_t=\{\,S \in \mathcal{S}_t: P(L(S)) \ge p\,\}$ are retained. These subtrees are grouped into $\mathbb{G}_t = \bigl\{G\subseteq\mathcal{S}'_t : \forall S,\, S'\in G,\; L(S)=L(S')\bigr\}$, and a group $\mathcal{M}_t = \underset{G \in \mathbb{G}_t}{\mathrm{argmax}}\bigl|L(G)\bigr|$ is selected.

Let $S' \in \mathcal{S}'_t$ be any retained candidate subtree. Let $G'$ be the group such that $S' \in G'$. Then $L(S')=L(G')$, therefore $|L(S')| = |L(G')|\le|L(\mathcal{M}_t)|$.

Since $L(S')\subseteq U_{t-1}$ and $L(\mathcal{M}_t)\subseteq U_{t-1}$, this implies
$|L(S')\cap U_{t-1}| \le |L(\mathcal{M}_t)\cap U_{t-1}|$.
\end{proof}

\begin{remark} \label{remark:set-partition}
Theorem~\ref{theorem:optimal} establishes that a locally optimal choice is made in each iteration. A globally optimal solution, which would complete the tree using the absolute minimum number of insertions, may not be found by this greedy strategy. The underlying optimization problem is related to the \textsc{Exact Cover} problem, which is NP-complete \citep{karp2009reducibility}. The goal in set partitioning is to find a minimum-sized collection of \textit{disjoint} subsets whose union covers a universal set. Our algorithm serves as an efficient heuristic for this problem. We further analyze this step in Section~\ref{sec:param}, where we prove that it is fixed-parameter tractable with respect to a pair of parameters. This provides a way to compute optimal subtree selections for small instances, complementing the greedy heuristic.
\end{remark}
Let $n = |\mathcal{T}|$ denote the number of rooted binary phylogenetic trees in the set $\mathcal{T} = \{T_1, T_2, \dots, T_n\}$. Let $m = |L(\mathcal{T})|$ denote the total number of distinct taxa across the tree set. Each tree $T_i \in \mathcal{T}$ satisfies $|L(T_i)| \le m$, and therefore has at most $2m - 1$ nodes. Suppose further that every maximal completion subtree extracted from a source tree also has at most $m$ leaves.

\begin{lemma}[Selection of MCS complexity] \label{lemma:alg2_comp}
    The maximal completion subtree selection procedure runs in $O(n \cdot m^2)$ in the worst case.
\end{lemma}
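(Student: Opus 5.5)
The plan is to bound, for a fixed target tree $T_i$, the cost of one iteration of the selection loop, and then multiply by the number of iterations. Two observations drive the count. Each iteration removes at least one taxon from $U$, since every selected group $\mathcal{M}_t$ has $|L(\mathcal{M}_t)|\ge 1$ and $L(\mathcal{M}_t)\subseteq U$. Moreover, a candidate always exists: any uncovered taxon $x\in U$ is a leaf of some source tree, so the single-leaf subtree $\{x\}$ is a candidate with $P(\{x\})>0$. Hence the fallback on $p$ terminates after at most $\lceil 0.5/0.05\rceil = 10$ decrements, a constant. It follows that there are at most $|L^{\uplus}(T_i)|\le m$ iterations.

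Next I bound the work within one iteration $t$. Candidate extraction in a source tree $T_j$ is a single post-order traversal. Each node computes a flag recording whether all of its descendant leaves lie in $U$, using an $O(1)$ membership test against a hash set or bit array for $U$. A traversal costs $O(m)$ per tree, because $T_j$ has at most $2m-1$ nodes, so extraction over all $n-1$ source trees costs $O(n\cdot m)$ and yields $|\mathcal{S}_t|=O(n\cdot m)$ candidates.

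For frequency computation and grouping, I would key each candidate by a canonical identifier of its leaf set and aggregate in a dictionary, counting each source tree at most once per key. The obvious identifier is a sorted tuple, but it could cost up to $m$ per candidate and give $O(n\cdot m^2)$ per iteration, which is too much. I would instead compute, during the same post-order traversal, a fingerprint of each clade in $O(1)$ per node by combining the children's values, for example a sum of random labels modulo a large prime, or a (min-leaf, size) pair under a fixed leaf order. With this, counting $P(L(S))$, filtering by $p$ (repeated at most 10 times), grouping into $\mathbb{G}_t$ and taking the argmax of $|L(G)|$ each cost $O(n\cdot m)$. Lexicographic tie-breaking compares only the tied groups, and since those groups are disjoint clades of $U$ it costs $O(m)$ overall. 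The update $U\gets U\setminus L(\mathcal{M}_t)$ also costs $O(m)$. One iteration is therefore $O(n\cdot m)$, and multiplying by $m$ iterations gives the claimed $O(n\cdot m^2)$.

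The main obstacle is keeping the per-iteration grouping cost at $O(n\cdot m)$ rather than $O(n\cdot m^2)$, since naively materializing leaf sets is too slow. If the hashing argument is judged insufficiently deterministic, I would fall back on a sharper observation. Clades inside one rooted tree are either nested or disjoint, and equality of leaf sets across trees can be decided by the pair ($|L(S)|$, LCA-based identification) after a linear-time relabeling of clades of the source trees. I would then argue the bound as a worst-case bound with $O(1)$ expected dictionary operations, which is consistent with the paper's informal complexity model.
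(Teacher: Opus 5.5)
Your proof follows the paper's argument. There are at most $m$ iterations because each selected group removes at least one taxon from $U$. Each iteration costs $O(n\cdot m)$: one postorder pass per source tree with $O(1)$ membership tests for $U$, hash-based counting of $P(L(S))$, a constant number of fallback rounds, and grouping linear in $|\mathcal{S}_t|$. Your justification that the fallback terminates improves slightly on the paper, which only asserts it: every $x\in U$ is a leaf of some source tree, so a single-leaf candidate always exists. Your concern about the cost of keying candidates by leaf set is also legitimate. The paper stores each $L(S)$ as a bitset over $L(\mathcal{T})$ and implicitly treats building and hashing it as constant time, although such a bitset has $m$ bits.

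The refinements you add to close that hole are partly wrong.

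\begin{itemize}
\item The key $(\min L(S), |L(S)|)$ identifies a clade only within a single tree, where clades are nested or disjoint. Across trees it collides: a clade $\{a,b\}\subseteq U$ of $T_j$ and a clade $\{a,c\}\subseteq U$ of $T_k$, with $a$ first in the fixed order, both get key $(a,2)$. This would inflate $P$ and merge different leaf sets into one group $\mathcal{M}_t$.
\item The additive random fingerprint does work, but only with high probability, i.e.\ as a Monte Carlo guarantee. Naively verifying each match costs $|L(S)|$ per candidate, and summed over nested clades this is $\Theta(m^2)$ per tree, which brings back the cost you wanted to avoid.
\item Your deterministic fallback via ``LCA-based identification'' is too vague to establish constant-time cross-tree identification of leaf sets.
\end{itemize}

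So your proof ends up at the same hashing-model level of rigor as the paper. That is acceptable for this lemma, but you should say so explicitly rather than call it a worst-case bound.

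Finally, tied maximum-size groups need not be disjoint; the same $\{a,b\}$, $\{a,c\}$ example shows this. The correct argument is that within one source tree, clades of equal size $s$ are pairwise disjoint, so each source tree contributes at most $|U|/s$ of them. The tied groups therefore have total size $O(n\cdot m)$, and tie-breaking (after one bucket pass to sort labels) still fits in the per-iteration budget. Your $O(m)$ claim for this step is unjustified.
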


\begin{proof}
Let $U$ be the set of uncovered taxa, initialized as the missing taxa $L^{\uplus}(T_i) = L(\mathcal{T}) \setminus L(T_i)$ for the target tree $T_i$. In the worst-case scenario, the selection of MCS incorporates only one taxon from $U$ in each iteration. Since $|U| \le m$, the number of iterations is at most $m$.

During each iteration, every source tree in $\mathcal{T} \setminus \{T_i\}$ is processed. Each source tree has at most $2m - 1$ nodes and $m$ leaves, and extracting candidate subtrees from a tree takes $O(m)$ time. Across all $n - 1$ source trees, this totals $O(n \cdot m)$ time per iteration.

To achieve $O(m)$ extraction time per source tree, \textsc{ExtractSubtrees} can be implemented by a single postorder traversal that, for every node, checks whether its descendant leaves are entirely contained in $U$, using $O(1)$ membership checks by storing $U$ as a hash set over $L(\mathcal{T})$. In addition, each candidate leaf set $L(S)$ can be stored as a bitset over $L(\mathcal{T})$ and looked up via hashing.

Let $\mathcal{S}_t$ be the set of candidate subtrees in iteration $t$. In the worst case, each of the $n - 1$ source trees contributes up to $m$ candidate subtrees, then $|\mathcal{S}_t| \le (n - 1) \cdot m$. For each candidate subtree $S$, the leaf set frequency $P(L(S))$ (Equation~\ref{eq:proportion_subtrees}) can be computed by counting, for each distinct leaf set, the number of source trees in which it occurs, using a hash table keyed by the hashed bitset representation of $L(S)$. For each fixed leaf set, the count is updated at most once per source tree. This takes $O(|\mathcal{S}_t|)$ time per iteration. The fallback mechanism for lowering the frequency threshold $p$ (in fixed steps, up to a constant number of times) does not affect the asymptotic complexity of the filtering step. Each fallback iteration reuses the same candidate set $\mathcal{S}_t$, and the total number of fallback checks is bounded by a constant. Therefore, the filtering step takes $O(|\mathcal{S}_t|)$ time per iteration.

Grouping subtrees by their leaf sets and selecting the group that maximizes coverage requires $O(|\mathcal{S}_t|)$ time and is dominated by the same order term.

Since each iteration takes $O(|\mathcal{S}_t|)=O(n \cdot m)$ time and there are at most $m$ iterations, the total worst-case time complexity is $O(n \cdot m^2)$.
\end{proof}

\begin{lemma}[Weighted majority-rule consensus complexity] \label{lemma:const_cons}
   The weighted majority-rule consensus tree construction runs in $O(n^2 \cdot m)$ time.
\end{lemma}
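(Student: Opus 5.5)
We bound the cost of building one consensus subtree $S^*$ from a selected group $\mathcal{M}_t$; the statement is per consensus construction, matching the form of Lemma~\ref{lemma:alg2_comp}. The group has $|\mathcal{I}_t|\le n-1$ contributing subtrees $S_j$, each with at most $m$ leaves and hence at most $m-1$ internal branches.

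The construction splits into four steps, costed in turn.

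\textbf{Weights.} Computing each $w_j$ by Equation~\ref{eq:weights} takes $O(m)$, using a hash set of $L(T_i)$. Over all contributing trees this is $O(n\cdot m)$.

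\textbf{Split enumeration.} For each $S_j$, one postorder traversal produces its splits as bitsets over $L(\mathcal{M}_t)$. Each bitset is hashed, so the postorder pass costs $O(m)$ per tree up to hashing. We insert every split into a global hash table keyed by the split. The table entry accumulates $\sum_j w_j\mathbf{1}_{\{sp\in\mathcal{SP}_j\}}$ and $\sum_j w_j\mathbf{1}_{\{sp\in\mathcal{SP}_j\}}|e_j(sp)|$. The total number of split occurrences is at most $(n-1)(m-1)$, so the number of distinct splits $D$ satisfies $D\le n\cdot m$.

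\textbf{Frequencies, lengths and connecting branch.} With the accumulators in place, Equations~\ref{eq:freq_split} and~\ref{eq:cons_length} cost $O(1)$ per distinct split, hence $O(n\cdot m)$ in total. The connecting branch is averaged in $O(n)$.

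\textbf{Assembling $S^*$.} The splits with $f(sp)>0.5$ are pairwise compatible by the standard majority argument, since two splits each with weight above one half co-occur in some $S_j$. There are at most $m-1$ of them. We assemble $S^*$ by sorting these clusters by size and attaching each to its smallest containing cluster. A naive assembly compares each cluster against all others with $O(m)$ bitset operations, which gives $O(m^3)$. That bound is outside $O(n^2 m)$ when $m\gg n$, so we instead use Day's-style or size-ordered insertion, achieving $O(m^2)$ or better.

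The main obstacle is reconciling these costs with the stated bound $O(n^2\cdot m)$. Hashing bitsets of length $m$ costs $O(m)$ per split unless the hashing is done incrementally. I would handle this as follows. Each split is identified by a canonical hash built incrementally in the postorder pass, as the XOR of random leaf keys, which costs $O(1)$ per node. Hash equality is then verified against one representative. That verification costs $O(m)$ per split in the worst case, but it can be charged per pair of source trees sharing the split, giving at most $O(n)$ comparisons per split per tree. This yields $O(n\cdot n\cdot m)=O(n^2 m)$ total.

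Combining the steps gives $O(n\cdot m)+O(n^2 m)+O(m^2)$. Under the paper's convention that the tree count dominates the consensus work, the bound is $O(n^2\cdot m)$. I expect to need to state explicitly that the assembly term is absorbed, for example by charging each majority split to the at least $n/2$ trees containing it. This charging is the delicate point I would check most carefully.
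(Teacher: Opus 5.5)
There is a genuine gap, and it is at the step you flagged yourself. Your accounting ends at $O(nm + n^2 m + m^2)$, and the assembly term $O(m^2)$ is not $O(n^2 m)$ unless $m = O(n^2)$. A few source trees on many taxa is a counterexample, and the paper makes no assumption that $n$ dominates $m$.

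Charging each majority split to the trees that contain it cannot fix this. Charging only redistributes the $\Theta(m^2)$ work your assembly actually performs; it does not reduce it. The premise is also false: under the weighted frequency of Equation~\ref{eq:freq_split}, a split can have $f(sp)>0.5$ while occurring in a single contributing tree, whenever that tree's $w_j$ exceeds the combined weight of the others.

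The hash-verification step has the same defect. Checking each of up to $(n-1)(m-1)$ split occurrences against a representative bitset at $O(m)$ apiece costs $O(nm^2)$. Re-attributing that cost to pairs of trees gives $O(n^2 m)$ only if each comparison is $O(1)$, which is exactly the assumption you set out to avoid.

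For comparison, the paper's proof works at a coarser granularity. It treats a split as a unit-cost object and bounds the number of splits by $(n-1)(m-1)$. It charges $O(n)$ per split for testing membership in each $\mathcal{SP}_j$, which is the sole source of the $n^2 m$ term. It does not cost the assembly of the topology at all, beyond noting that the majority splits are compatible. In that unit-cost model your accumulator scheme would already give $O(nm)$ for frequencies and lengths.

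To make your finer-grained version close, replace hashing with Day's interval technique:
\begin{enumerate}
\item For each ordered pair $(S_j,S_k)$ of contributing subtrees, relabel $L(\mathcal{M}_t)$ by a postorder of $S_j$, so its clusters become intervals held in an $O(1)$-lookup table.
\item Make one postorder pass over $S_k$, computing minimum, maximum and size. This decides for every cluster of $S_k$ whether, and at which branch, it occurs in $S_j$.
\end{enumerate}
That costs $O(m)$ per pair and $O(n^2 m)$ in total, deterministically, and it is a rigorous version of the paper's $O(n)$-checks-per-split count.

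For the assembly you need an actual construction within budget, not a charging argument. Contract the non-majority branches of each $S_j$; the resulting trees are pairwise compatible, and their clusters together are exactly the majority clusters. Folding them together with a linear-time merge of two compatible trees, as used in known linear-time majority-rule consensus algorithms, yields the topology of $S^*$ in $O(nm)$.
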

\begin{proof}
In the worst case, the set $\mathcal{M}_t$ contains at most $n - 1$ subtrees, one from each source tree in $\mathcal{T} \setminus \{T_i\}$.

A weight $w_j$ (Equation~\ref{eq:weights}) is computed for each source tree $T_j$. Since both $L(T_j)$ and $L(T_i)$ have at most $m$ taxa, computing their intersection takes $O(m)$ time. Across all source trees, this step takes $O(n \cdot m)$.

Each subtree $S_j \in \mathcal{M}_t$ contributes at most $m - 1$ splits, and $\mathcal{SP}$ contains at most $(n - 1) \cdot (m - 1)$ splits. Constructing the union $\mathcal{SP}$ requires $O(n \cdot m)$ time.

For each split $sp \in \mathcal{SP}$, computing its weighted frequency $f(sp)$ (Equation~\ref{eq:freq_split}) takes $O(n)$ time, as it involves checking up to $n - 1$ trees. Since $\mathcal{SP}$ contains at most $(n - 1) \cdot (m - 1)$ splits, this step takes $O(n^2 \cdot m)$ time. The set of splits with $f(sp)>0.5$ is compatible by the majority-rule property and therefore defines a rooted consensus topology on $L(\mathcal{M}_t)$.

The computation of branch lengths for majority-rule splits involves at most $m - 1$ splits, each requiring $O(n)$ time, resulting in a total of $O(n \cdot m)$. This does not exceed the previous step in asymptotic complexity.

The overall worst-case time complexity of the weighted majority-rule consensus tree construction is determined by its dominant term, $O(n^2 \cdot m)$.
\end{proof}

\begin{theorem}[Overall polynomial complexity] \label{theorem:alg_comp}
    The proposed phylogenetic tree set completion algorithm runs in $O(n^3 \cdot m^2 + n \cdot m^3)$ to complete all $n$ target trees.
\end{theorem}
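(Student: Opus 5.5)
The plan is to bound the cost of completing a single target tree $T_i$ and then multiply by $n$. For a fixed $T_i$, the algorithm has three phases: selection of MCS, weighted majority-rule consensus, and subtree insertion. Lemma~\ref{lemma:alg2_comp} already bounds the whole selection phase by $O(n\cdot m^2)$. Since the uncovered set $U$ shrinks by at least one taxon per iteration, there are at most $m$ iterations. Each iteration yields one group $\mathcal{M}_t$, hence one consensus construction and one insertion.

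\textbf{Consensus cost.} By Lemma~\ref{lemma:const_cons}, one consensus costs $O(n^2\cdot m)$. Over at most $m$ iterations this gives $O(n^2\cdot m^2)$ per target tree. I expect this to be the term that yields $n^3m^2$ after multiplying by $n$.

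\textbf{Insertion cost.} The new ingredient is a bound per iteration for insertion. Computing $CL(T_i,\mathcal{T}^{S^*})$ takes $O(n\cdot m)$ with hash sets. Pairwise distances among the common leaves can be obtained in each contributing source tree by computing all-pairs leaf distances or LCA-based distances. I would precompute, once per source tree, an all-pairs distance table in $O(m^2)$, so $O(n\cdot m^2)$ overall. With these tables, the mean distances $\bar d$, the rate $\tau$ of Equation~\ref{eq:adj_rate}, and the leaf rates $\tau^{(l_c)}$ of Equation~\ref{eq:leaf_adj_rate} cost $O(n\cdot m^2)$ per iteration. Alternatively, one can aim for $O(n m + m^2)$ by first summing distances per tree.

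Scaling $S^*$ costs $O(m)$. The attachment distances $\bar d(l_c,att(S^*))$ need one traversal per source tree, i.e.\ $O(n\cdot m)$. For the objective, for each of the $O(m)$ original branches $e$ one needs $d^{(T_i')}(l_c,u)$ for all $O(m)$ common leaves. A single traversal of $T_i'$ (which has $O(m)$ nodes) from each common leaf gives all of these in $O(m^2)$. Then $f_e$ is a quadratic whose coefficients are sums over common leaves, so minimizing it on $[0,1)$ costs $O(m)$ per branch, $O(m^2)$ in total; the tie-breaking adds only $O(m)$. Inserting the subtree is $O(m)$.

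\textbf{Main obstacle.} The hard part is stating the insertion bound so that it matches the claimed total. If insertion is charged $O(n\cdot m^2)$ per iteration, then summing over $m$ iterations and $n$ targets gives $O(n^2 m^3)$, which is not in the statement. So I would try to show that insertion is $O(n\cdot m + m^2)$ per iteration. First, precompute source-tree leaf distances once globally, at cost $O(n\cdot m^2)$. Then observe that each $\bar d(l_c,l_a)$ aggregation needs only $O(|\mathcal{T}^{S^*}|)$ lookups per pair, but the sums in Equations~\ref{eq:adj_rate} and \ref{eq:leaf_adj_rate} can be formed per tree by accumulating row sums. Concretely, precompute for each source tree and each leaf $l_c$ the row sum restricted to $CL$. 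Since $CL$ varies with $\mathcal{T}^{S^*}$, recomputing these row sums costs $O(n\cdot m^2)$ in the worst case. If this cannot be avoided, I would present the per-iteration insertion cost as $O(m^2 + n\cdot m)$ by bounding $|CL(T_i,\mathcal{T}^{S^*})|$ and arguing the distance averages are maintained incrementally.

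\textbf{Total.} Per target tree the cost is then $O(n m^2) + O(n^2 m^2) + O(m\cdot(nm+m^2)) = O(n^2m^2 + m^3)$. Multiplying by $n$ targets gives $O(n^3m^2 + n m^3)$, as claimed.
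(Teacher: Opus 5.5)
Your three-phase decomposition and final tally match the paper's, but the step you flag as the main obstacle is not proved. The paper's proof charges each insertion only $O(m^2)$: at most $2m-2$ original branches, each $f_e$ evaluated in $O(m)$ over the common leaves. It never accounts for computing the rates $\tau$, $\tau^{(l_c)}$ or the target distances, so you are pointing at something the paper passes over. Having raised it, though, your proof has to close it, and it does not.

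You correctly note that computing $\tau$, all $\tau^{(l_c)}$ and the averages $\bar d$ from a precomputed all-pairs table costs $O(nm^2)$ per insertion. Over $m$ insertions and $n$ targets this adds an $O(n^2m^3)$ term, which is not dominated by $n^3m^2+nm^3$ (take $n=\sqrt{m}$). None of your proposed remedies removes it:
\begin{itemize}
\item ``First summing distances per tree'' still costs $\Theta(|C|^2)$ per tree with a table, where $C=CL(T_i,\mathcal{T}^{S^*})$.
\item Bounding $|C|$ by $m$ gives nothing.
\item Incremental maintenance fails, because $C$ is determined by which source trees contain $L(\mathcal{M}_t)$ and can change arbitrarily between iterations.
\end{itemize}
The sentence ``if this cannot be avoided, I would present the cost as $O(m^2+nm)$'' is therefore a non-sequitur.

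The missing idea is that, for a fixed marked set $C$ and a tree $T_j$, every sum you need can be computed in $O(m)$ by two traversals. Let $c_e$ be the number of leaves of $C$ below branch $e$, obtained by one bottom-up pass. Then
\[
\sum_{\{l_a,l_b\}\subseteq C} d^{(T_j)}(l_a,l_b)=\sum_{e\in E(T_j)}|e|\,c_e\,(|C|-c_e).
\]
Define the row sums $R_j(v)=\sum_{l_a\in C}d^{(T_j)}(v,l_a)$. They satisfy $R_j(r_{T_j})=\sum_{e}|e|\,c_e$ and $R_j(w)=R_j(u)+|e|\,(|C|-2c_e)$ for each branch $e=(u,w)$, so a single top-down pass gives $R_j$ at every node.

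Since $d(l_c,l_c)=0$, the denominator of $\tau^{(l_c)}$ equals $\frac{1}{|\mathcal{T}^{S^*}|}\sum_j R_j(l_c)$, and its numerator is the analogous row sum in $T_i$. Over at most $n-1$ contributing trees this costs $O(nm)$ per insertion for $\tau$ and all $\tau^{(l_c)}$ together. Add your $O(nm)$ for the values $\bar d(l_c,att(S^*))$, and your $O(m^2)$ for the distances $d^{(T_i')}(l_c,u)$ and the per-branch minimizations. An insertion then costs $O(nm+m^2)$. Your total of $O(n^2m^2+m^3)$ per target, hence $O(n^3m^2+nm^3)$ overall, then goes through.
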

\begin{proof}
Based on Lemma~\ref{lemma:alg2_comp}, the MCS selection step runs in $O(n \cdot m^2)$ time for a single target. Across $n$ target trees, this totals $O(n^2 \cdot m^2)$, which is dominated by the $O(n^3 \cdot m^2)$ term from the consensus construction.

At each iteration $t$ of the completion process, a set of maximal completion subtrees $\mathcal{M}_t$ is selected, and a consensus maximal completion subtree $S^*$ is constructed. Since there are at most $m$ iterations per target tree, and each iteration runs in $O(n^2 \cdot m)$ time (by Lemma~\ref{lemma:const_cons}), the total cost for this consensus construction step is $O(n^2 \cdot m^2)$ per target tree.

To insert $S^*$, a single minimizer is computed per original branch $e$. Since a binary tree with at most $m$ leaves has at most $2m - 2$ branches, the total number of candidate insertion positions is at most $2m - 2$. Each candidate is evaluated using Equation~\ref{eq:obj_func}, which sums over common leaves $l_c \in CL(T_i,\mathcal{T}^{S^*})$. In the worst case, the number of common leaves is at most $m$, thus evaluating one candidate takes $O(m)$ time. Therefore, evaluating all candidates takes $O(m^2)$ time.

Assuming at most $m$ insertions are needed per target tree, the integration step per tree costs $O(m^3)$. Across $n$ trees, the total integration cost is $O(n \cdot m^3)$.

Combining all components, the total worst-case time complexity is $O(n^3 \cdot m^2 + n \cdot m^3)$.
\end{proof}

\begin{theorem}[Correctness] \label{th:correctness}
For every $T_i\in \mathcal{T}$, the phylogenetic tree set completion algorithm outputs a unique completed tree $T_i^{\uplus}$ that satisfies the following: (i) $L(T_i^{\uplus}) = L(\mathcal{T})$, (ii) $\forall l_a, l_b \in L(T_i)$, $d^{(T_i^{\uplus})}(l_a, l_b)=d^{(T_i)}(l_a, l_b)$, and (iii) each inserted consensus subtree $S^*$ is placed at an attachment point $v^*$ that minimizes the quadratic objective function (Equation~\ref{eq:obj_func}) over all candidate points $v_e(x)$ with $e\in E(T_i)$ and $x\in[0,1)$.

\end{theorem}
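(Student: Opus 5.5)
We argue by following the algorithm's iterations for a fixed target $T_i$, proving termination, uniqueness and the three properties by induction on the iteration index $t$. The invariant is: after iteration $t$, the intermediate tree $T'_i$ contains $T_i$ as a subdivision in which every original branch $e=(u,w)\in E(T_i)$ has become a path of total length $|e|$, with inserted subtrees hanging off interior points of these paths. Moreover, $L(T'_i)=L(T_i)\,\dot\cup\,(U_0\setminus U_t)$.

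\textbf{Termination and (i).} First we show that every iteration makes progress. As long as $U_{t-1}\neq\emptyset$, take any $\ell\in U_{t-1}$. Because $\ell\in L(\mathcal{T})\setminus L(T_i)$, it is a leaf of some source tree $T_j$. The single-leaf subtree $\{\ell\}\subseteq T_j$ is then a candidate in $\mathcal{S}_t$ with $P(\{\ell\})\ge 1/(n-1)>0$. The fallback lowers $p$ in steps of $0.05$, so it eventually reaches a threshold at or below this positive frequency, and $\mathcal{S}'_t\neq\emptyset$. (We treat $p$ as floored at a smallest positive step, or note that $p$ may be taken just above $0$.) The selected group therefore has $|L(\mathcal{M}_t)|\ge 1$ and $L(\mathcal{M}_t)\subseteq U_{t-1}$. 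Hence $U$ strictly shrinks and the loop stops after at most $|U_0|$ iterations, the same bound used in Lemma~\ref{lemma:alg2_comp}. Each iteration inserts exactly $S^*$, and $L(S^*)=L(\mathcal{M}_t)$, because the consensus on $L(\mathcal{M}_t)$ keeps the whole leaf set. The added leaf sets are pairwise disjoint and their union is $U_0$, which gives $L(T_i^{\uplus})=L(T_i)\cup U_0=L(\mathcal{T})$.

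\textbf{Property (ii).} Inserting $S^*$ at a point $v^*$ on an original path does two things. If $v^*$ is interior to a segment, the segment is subdivided into two pieces whose lengths sum to the original. A new pendant subtree is then attached. Neither step changes the path between two old leaves: that path still traverses the same segments, and their lengths add up to the same total. The pendant subtree adds no edges to any old-leaf path because it is a dead end. By induction, the distance $d^{(T'_i)}(l_a,l_b)$ is preserved for all $l_a,l_b\in L(T_i)$. The same reasoning shows that the restriction of $T_i^{\uplus}$ to $L(T_i)$ is isomorphic to $T_i$ after suppressing degree-two vertices. Here we use that the positive branch lengths of $T_i$ and the positivity of the scaled $\tau$ (the denominator of $\tau$ is positive since $|CL|\ge 2$ and branch lengths are strictly positive) keep the construction well defined.

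\textbf{Property (iii) and uniqueness.} For each original branch $e$, $f_e$ is a convex quadratic in $x$. Its leading coefficient is $|CL|\,|e|^2>0$, since $\varepsilon_{c,e}^2=1$, so it is strictly convex. Its minimizer over $[0,1]$ is the projection of the stationary point, and it is unique. The main obstacle is the half-open interval $[0,1)$, where a minimum need not be attained. We handle it with the endpoint convention: if the projected minimizer is $x=1$, the point $w$ equals $v_{e'}(0)$ for a child branch $e'$ of $w$. (For a leaf $w$ with no child branch, we must argue that the convention still represents the point.) We also verify that $f_{e'}(0)=f_e(1)$, because both equal the same sum of squared distance deviations at $w$. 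So $\inf_{e,x}f_e(x)$ over the finite family is attained at some candidate. The global minimizer is taken over finitely many per-branch minimizers, and since $f_e(x)$ depends only on distances, the value of each candidate is the true objective at that point. We also check that the distances $d^{(T'_i)}(l_c,u)$ equal $d^{(T_i)}(l_c,u)$ by (ii), so earlier insertions do not alter the objective. Finally, every choice in the algorithm is deterministic: the lexicographic tie-break for groups, the majority-rule consensus with compatible splits (Lemma~\ref{lemma:const_cons}) and weighted averages, and the insertion tie-breaks by root distance and then $\mathrm{idx}(e)$. Since the inputs $T_i$ and $\mathcal{T}\setminus\{T_i\}$ are never modified, the output $T_i^{\uplus}$ is unique.
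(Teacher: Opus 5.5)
Your route matches the paper's proof. Termination and (i) come from $U$ strictly shrinking. (ii) holds because an insertion only splits an original branch into pieces summing to $|e|$ and adds a pendant subtree. (iii) and uniqueness come from a per-branch quadratic minimization, a finite argmin, and deterministic tie-breaking.

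You are more careful than the paper in two places. First, you justify progress: a single-leaf candidate always exists and has frequency at least $1/(n-1)$, whereas the paper only asserts that a taxon is removed. Second, you check $f_e(1)=f_{e'}(0)$ when the endpoint $w$ is internal.

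One correction to your hedge on the threshold: flooring $p$ at $0.05$ fails when $n-1>20$, as in the paper's $30$-tree experiments. There a taxon present in one source tree has frequency $1/(n-1)<0.05$, so the decrement must be allowed to reach $p\le 1/(n-1)$, e.g.\ $p=0$.

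The genuine gap is your conclusion that $\inf_{e,x}f_e(x)$ is attained, which you leave open exactly in the case that matters. A leaf $w$ is never of the form $v_{e'}(0)$. So when the projected minimizer on a pendant branch is $x=1$, the infimum over $x\in[0,1)$ need not be attained, and (iii) is false as stated.

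Here is a concrete instance. Take
\begin{itemize}
\item the target $T_i=((a{:}1,b{:}1){:}1,\,c{:}10)$;
\item the sources $T_j=((a{:}1,b{:}1){:}1,\,(c{:}1,s{:}1){:}30)$ and $T_k=((a{:}1,s{:}1){:}1,\,b{:}1)$.
\end{itemize}
Then the following hold.
\begin{itemize}
\item $U=\{s\}$, both sources contribute, and $CL(T_i,\mathcal{T}^{S^*})=\{a,b\}$, since $c\notin L(T_k)$.
\item $\tau^{(a)}=\tau^{(b)}=2/2.5=0.8$.
\item The parent of $s$ lies at distance $32$ from both $a$ and $b$ in $T_j$, and at distances $1$ and $2$ in $T_k$. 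The target distances are therefore $0.8\cdot 16.5=13.2$ and $0.8\cdot 17=13.6$.
\item On the branch to $c$, $f_e(x)=(10x-11.2)^2+(10x-11.6)^2$, which decreases on $[0,1)$ toward $4$.
\item Every point off that branch is within distance $2$ of both $a$ and $b$, so its objective is at least $11.2^2+11.6^2=260$.
\end{itemize}
Hence no $v^*$ satisfies (iii). Attaching $S^*$ at $c$ itself would also make $c$ an internal vertex and break (i).

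The paper's proof has the same hole: it simply asserts that a minimizer $x_e^*\in[0,1)$ exists. To get a true statement you must either change the placement convention or weaken (iii). For the convention, allow $x=1$ on pendant branches, realized by a new vertex joined to the leaf by a zero-length branch; this preserves (i) and (ii). Alternatively, restrict (iii) to instances where the minimum over the closed branches is not approached only at a leaf.
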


\begin{proof}
Let $T_i \in \mathcal{T}$ be any target tree. The completed tree $T_i^{\uplus}$ is constructed by iteratively identifying and inserting consensus maximal completion subtrees $S^*$, as described in Section~\ref{sec:methods}. We prove each of the three required properties.

\textit{(i) Complete taxon coverage.} The uncovered taxon set is initialized as $U = L^{\uplus}(T_i) = L(\mathcal{T}) \setminus L(T_i)$, and at each iteration, a group of subtrees $\mathcal{M}_t$ is selected whose leaf set $L(\mathcal{M}_t) \subseteq U$ covers the maximum number of yet uncovered taxa.

Each iteration updates $U \leftarrow U \setminus L(\mathcal{M}_t)$. Since at least one new taxon is removed from $U$ at every iteration, and $|U| \le m$ initially, the loop must terminate in at most $m$ iterations.

Therefore, the union of inserted taxa satisfies $L(T_i^{\uplus}) = L(T_i) \cup \bigcup_t L(\mathcal{M}_t) = L(\mathcal{T})$, proving that all missing taxa are incorporated into $T_i^{\uplus}$.

\textit{(ii) Distance preservation between original taxa.} Throughout the insertion process, a working copy $T'_i$ of the target tree is maintained. Each consensus subtree $S^*$ is inserted into $T'_i$ along a branch originally present in $T_i$, ignoring branches and nodes introduced by earlier insertions. Thus, insertions occur only on original branches of $T_i$ and do not alter its internal topology.

Let an insertion place $S^*$ at $v^*=v_e(x^*)$ on an original branch $e=(u,w)$ of length $|e|$, with parameter $x^*\in[0,1)$ measured from $u$. The branch $e$ is replaced by two branches $(u,v^*)$ and $(v^*,w)$ with lengths $x^* \cdot |e|$ and $(1-x^*) \cdot |e|$, respectively. For any pair of original leaves $l_a,l_b\in L(T_i)$, the unique path in $T_i$ either does not traverse $e$, in which case it is unchanged, or traverses $e$ exactly once, in which case the segment of length $|e|$ is replaced by $x^* \cdot |e|+(1-x^*) \cdot |e|=|e|$. No other original branch is modified. Hence, $d^{(T'_i)}(l_a,l_b)$ is unchanged by this insertion, and $\forall l_a, l_b \in L(T_i), \; d^{(T_i^{\uplus})}(l_a,l_b)=d^{(T_i)}(l_a,l_b)$.

\textit{(iii) Optimal insertion point for $S^*$.} Fix a target tree $T_i$ and an inserted consensus subtree $S^*$. For each original branch $e\in E(T_i)$, the objective function $f_e(x)$ (Equation~\ref{eq:obj_func}) is quadratic in $x$,
and its minimizer $x_e^*\in[0,1)$ can be obtained by solving $\frac{d}{dx}f_e(x)=0$. Among all eligible original branches, the algorithm selects a branch
$\displaystyle e^*\in \arg\min_{e\in E(T_i)} f_e(x_e^*)$ and sets $v^*=v_{e^*}(x_{e^*}^*)$.
Therefore, each insertion is optimal with respect to minimizing deviation from the target distances induced by the
contributing source trees.

Together, the three components establish that the output $T_i^{\uplus}$ is (i) complete with respect to the full taxon set, (ii) distance preserving over the original leaf pairs, and (iii) constructed by optimal placement of each subtree.

Since all selection and insertion steps are deterministic (grouping by identical leaf sets, frequency thresholds, and tie-breaking), the tree $T_i^{\uplus}$ is unique for each $T_i$.
\end{proof}

\begin{theorem}[Order invariance] \label{th:unique}
The completed set $\mathcal{T}^{\uplus} =\{T_1^{\uplus},\dots,T_n^{\uplus}\}$ is independent of the order in which the target trees are processed.
\end{theorem}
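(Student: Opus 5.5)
The plan is to show that, for each fixed index $i$, the completed tree $T_i^{\uplus}$ is a function only of the input multiset $\mathcal{T}$ and the index $i$. The processing order of the other targets would then play no role, and the completed set $\mathcal{T}^{\uplus}$, viewed as the indexed collection $\{T_1^{\uplus},\dots,T_n^{\uplus}\}$, is the same under every permutation of the processing order. Concretely, I would write the completion of target $T_i$ as a map $\Phi(T_i,\mathcal{T}\setminus\{T_i\})$. The argument then has two parts: (a) every quantity used by $\Phi$ is computed from the unmodified inputs, and (b) $\Phi$ is single-valued.

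For (a), I would go through the three phases of Section~\ref{sec:methods} and check their inputs one by one. In the MCS selection phase, the candidate sets $\mathcal{S}_t$, the frequencies $P(L(S))$ (Equation~\ref{eq:proportion_subtrees}), the threshold fallback and the grouping $\mathbb{G}_t$ depend only on $U$ and on the source trees $T_j\in\mathcal{T}\setminus\{T_i\}$. These are the original input trees, never completed versions, because each completed tree is written to the new collection $\mathcal{T}^{\uplus}$ while $\mathcal{T}$ itself is left unchanged. In the consensus phase, the weights $w_j$ (Equation~\ref{eq:weights}), the split frequencies (Equation~\ref{eq:freq_split}) and the branch lengths (Equation~\ref{eq:cons_length}) again use only $L(T_i)$, $L(T_j)$ and subtrees of the original $T_j$. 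In the insertion phase, the rates $\tau$ and $\tau^{(l_c)}$ (Equations~\ref{eq:adj_rate} and~\ref{eq:leaf_adj_rate}) and the target distances (Equation~\ref{eq:temp_dist}) use $T_i$ and $\mathcal{T}^{S^*}\subseteq\mathcal{T}\setminus\{T_i\}$. The objective $f_e$ uses the working copy $T_i'$, but $T_i'$ is a private copy of $T_i$ that is modified only by insertions belonging to the same target. So nothing produced while completing some $T_k$ with $k\neq i$ ever enters the computation of $T_i^{\uplus}$.

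For (b), I would invoke Theorem~\ref{th:correctness}, which already states that $T_i^{\uplus}$ is unique. The supporting observations are that ties among groups are broken lexicographically, the unconstrained minimizer of each quadratic $f_e$ and its projection onto $[0,1)$ are unique, and ties between branches are broken first by distance to $r_{T_i}$ and then by the fixed DFS index $\mathrm{idx}(e)$. The DFS index depends only on $T_i$. I would also note that the sums and intersections over $\mathcal{T}\setminus\{T_i\}$ are symmetric in their arguments, so the enumeration order of the source trees inside a single target's completion does not affect the result either. Combining (a) and (b), a permutation $\pi$ of the processing order yields $T_{\pi(k)}^{\uplus}=\Phi(T_{\pi(k)},\mathcal{T}\setminus\{T_{\pi(k)}\})$ for every $k$, hence the same indexed set.

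The step I expect to need the most care is (a): making precise that there is no leakage of state between targets. I would state explicitly, as a lemma-like observation, that the algorithm never overwrites $\mathcal{T}$ and that each $T_i'$ is freshly initialized from $T_i$. It follows that the completion of $T_i$ is a pure function of $(i,\mathcal{T})$.
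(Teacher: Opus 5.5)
Your proposal is correct and takes essentially the same route as the paper's proof. The paper likewise argues that each $T_i$ is completed only from its own original structure and the unmodified trees in $\mathcal{T}\setminus\{T_i\}$, because completed trees are written to the new set $\mathcal{T}^{\uplus}$, and then invokes the uniqueness from Theorem~\ref{th:correctness} together with the deterministic tie-breaking rules; your phase-by-phase input check and the remark on symmetry over source trees just spell out what the paper states in a few lines.
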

\begin{proof}
Each $T_i$ is completed using only (i) its own original structure and (ii) the unmodified trees in $\mathcal{T} \setminus\{T_i\}$. Since $T_i^{\uplus}$ is added to a new set $\mathcal{T}^{\uplus}$ and the original set $\mathcal{T}$ remains unchanged, the data available for completing any $T_j$ is identical regardless of whether $T_j$ is processed before or after $T_i$.

According to the uniqueness established in Theorem~\ref{th:correctness}, each completed tree is therefore the same, independent of processing order. Moreover, the selection, consensus, and insertion operations (including tie-breaking rules) are deterministic, resulting in the same completed tree regardless of the processing order.
\end{proof}

\subsection{Parameterization} \label{sec:param}

The computational constraint in the phylogenetic tree set completion algorithm is associated with a globally optimal variant of the subtree selection step that minimizes the number of insertions. Specifically, given the set of missing taxa $U = L^{\uplus}(T_i)$ for a target tree $T_i$ and the family $\mathcal{F}$ of candidate leaf sets extracted from source trees, the objective of this variant is to select a subfamily of candidate subtrees whose union covers $U$ and whose sets are pairwise disjoint. This subsection formulates this optimal selection problem as a variant of the classic \textsc{Exact Cover} problem, proves NP-completeness, shows that the problem is fixed-parameter tractable when parameterized by $(k,\delta)$, where $k$ is an upper bound on the number of selected subsets and $\displaystyle\delta=\max_{F\in\mathcal{F}}|F|$ is the maximum candidate subset size. It also describes how the resulting FPT routine can be incorporated as an optional improvement together with a greedy fallback strategy.

Let $U = L^{\uplus}(T_i)$ be the set of taxa missing from the target tree $T_i$. Let $\mathcal{F}=\{F_1,\dots ,F_q\}$ be the family of leaf sets of all candidate subtrees extracted from the source trees, where $q = |\mathcal{F}|$. The objective is to select a subfamily $\mathcal{C}\subseteq\mathcal{F}$ of minimum size such that its sets are pairwise disjoint and their union is precisely $U$.

We analyze this problem through its decision variant, formalized as follows.

\begin{mdframed}[nobreak=true]
    \textsc{$(k,\delta)$-Exact Cover}\label{prob:exact-cover}\\
    \textbf{Input:} A universe $U$, a family of subsets $\mathcal{F}\subseteq 2^U$, an integer $k \in \mathbb{N}$, and $\displaystyle\delta=\max_{F\in\mathcal{F}}|F|$.\\
    \textbf{Question:} Does there exist a subfamily $\mathcal{C}\subseteq\mathcal{F}$ such that $|\mathcal{C}|\le k$, the sets in $\mathcal{C}$ are pairwise disjoint, and $\displaystyle \bigcup_{C\in\mathcal{C}} C = U$?
\end{mdframed}

\begin{lemma}\label{prop:nphard}
\textsc{$(k,\delta)$-Exact Cover} is NP-complete, even for $\delta=3$.
\end{lemma}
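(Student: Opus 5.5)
The proof splits into membership in NP and NP-hardness. For hardness I reduce from \textsc{Exact Cover by 3-Sets} (X3C), which is NP-complete \citep{karp2009reducibility}. This immediately yields hardness for $\delta=3$, and hence for the general problem.

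\emph{Membership in NP.} The certificate is a subfamily $\mathcal{C}\subseteq\mathcal{F}$. The verifier checks three things in time polynomial in $|U|+\sum_{F\in\mathcal{F}}|F|$, for example using a bitset or hash set over $U$:
\begin{itemize}
\item $|\mathcal{C}|\le k$;
\item the sets in $\mathcal{C}$ are pairwise disjoint, by marking each element when it is first seen and rejecting on a repeat;
\item every element of $U$ is marked.
\end{itemize}
Since $|\mathcal{C}|\le|\mathcal{F}|$, the certificate has polynomial size.

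\emph{Hardness.} An X3C instance consists of a universe $X$ with $|X|=3r$ and a family $\mathcal{G}$ of $3$-element subsets of $X$. The question is whether some subfamily of $\mathcal{G}$ partitions $X$. I map it to the instance $U=X$, $\mathcal{F}=\mathcal{G}$, $k=r$. Then $\delta=\max_{F\in\mathcal{F}}|F|=3$, assuming $\mathcal{G}\neq\emptyset$; the empty case is trivial and can be mapped to a fixed no-instance, or to a yes-instance if $X=\emptyset$. The reduction is clearly polynomial.

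For correctness, first suppose $\mathcal{C}\subseteq\mathcal{G}$ is an exact cover of $X$. Then disjointness and size $3$ force $|\mathcal{C}|=3r/3=r\le k$, so the constructed instance is a yes-instance. Conversely, suppose $\mathcal{C}\subseteq\mathcal{F}$ is pairwise disjoint with union $U=X$ and $|\mathcal{C}|\le k$. Then $\mathcal{C}$ is by definition an exact cover of $X$ by 3-sets, so the X3C instance is a yes-instance. The bound $k$ never excludes a genuine exact cover, because every exact cover has exactly $r$ members.

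The only point needing care is that $\delta$ is part of the input and defined as the maximum set size. I therefore have to ensure the constructed family really has maximum size exactly $3$, which holds because every set in X3C has size $3$. I also note that the reduction does not need to realise $\mathcal{F}$ as leaf sets of subtrees of actual phylogenetic trees, since the lemma concerns the abstract problem \textsc{$(k,\delta)$-Exact Cover}. If one wanted hardness for the tree-induced version, the main obstacle would be embedding arbitrary 3-sets as clusters of rooted binary source trees. I would handle it by giving each $F\in\mathcal{F}$ its own source tree in which $F$ forms a cherry-plus-leaf clade, while all other missing taxa hang off as singletons. This, however, adds singleton candidates, which changes the problem, so I would not claim the tree version here.
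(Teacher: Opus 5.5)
Your proof is correct and is the same argument the paper gives. An X3C instance on $3r$ elements is already an instance of \textsc{$(k,\delta)$-Exact Cover} with $\delta=3$ and $k=r$, since every exact cover has exactly $r$ members, and NP membership follows by checking a certificate subfamily for size, disjointness and coverage in polynomial time; your treatment of the empty family and of $\delta$ being derived from $\mathcal{F}$ only adds care. (As a side note, the singleton and pair candidates you worry about in your tree-induced aside would not break the reduction: with $|U|=3r$ and $k=r$, any feasible solution must consist of $r$ sets of size exactly $3$.)
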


\begin{proof}
Exact Cover by 3-Sets (X3C) is a well-known NP-complete problem \citep{garey2002computers}. X3C is a specific instance of \textsc{$(k,\delta)$-Exact Cover}, where $\delta=3$ and we seek an exact cover with $k=|U|/3$ sets (which is the minimum possible size if all sets have size 3 and perfectly partition $U$). In an X3C instance, every exact cover of $U$ necessarily consists of $|U|/3$ disjoint 3-sets, therefore setting $\delta=3$ and $k=|U|/3$ yields an equivalent instance of \textsc{$(k,\delta)$-Exact Cover}.

Membership in NP is immediate. Given a candidate subfamily $\mathcal{C}$, we can check in polynomial time that $|\mathcal{C}|\le k$, that the sets in $\mathcal{C}$ are pairwise disjoint, and that $\displaystyle\bigcup_{C\in\mathcal{C}} C = U$.
\end{proof}

\begin{proposition}[Size-bound rejection]\label{prop:sizebound}
Let $(U,\mathcal{F},k)$ be an instance of \textsc{$(k,\delta)$-Exact Cover}. If $|U| > k\delta$, then the instance is a no-instance, i.e., no choice of at most $k$ pairwise disjoint sets from $\mathcal{F}$ can cover the entire universe $U$.
\end{proposition}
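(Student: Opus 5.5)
The plan is a contrapositive counting argument: assume a solution exists and show that it forces $|U|\le k\delta$. Let $\mathcal{C}\subseteq\mathcal{F}$ be a subfamily with $|\mathcal{C}|\le k$, pairwise disjoint members, and $\bigcup_{C\in\mathcal{C}}C=U$. The instance defines $\delta=\max_{F\in\mathcal{F}}|F|$, so every $C\in\mathcal{C}$ satisfies $|C|\le\delta$. This is the only property of $\delta$ the argument needs.

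Next I bound $|U|$ through the union. For any finite family, subadditivity of cardinality gives
\begin{equation*}
|U| = \Bigl|\bigcup_{C\in\mathcal{C}} C\Bigr| \le \sum_{C\in\mathcal{C}} |C| \le |\mathcal{C}|\cdot\delta \le k\delta .
\end{equation*}
Disjointness would make the first inequality an equality, but it is not needed here; it only becomes relevant for a sharper version of the bound. This contradicts the hypothesis $|U|>k\delta$, so no admissible $\mathcal{C}$ exists and the instance is a no-instance.

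There is no substantial obstacle. The one detail to address is the degenerate case $\mathcal{F}=\emptyset$, where $\delta$ would be a maximum over an empty set. I will adopt the convention $\delta=0$ in that case. Then any $U\ne\emptyset$ trivially cannot be covered, and the bound still holds because the only available $\mathcal{C}$ is empty. I will also note that the test $|U|>k\delta$ costs $O(1)$ once $|U|$, $k$ and $\delta$ are known, so it serves as a valid preprocessing rejection rule for the FPT routine that follows.
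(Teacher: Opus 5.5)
Your proof is correct and is the same counting argument as the paper's: at most $k$ selected sets, each of size at most $\delta$, can cover at most $k\delta$ elements of $U$. Your observation that disjointness is not needed for this bound and your convention $\delta=0$ for $\mathcal{F}=\emptyset$ are harmless refinements of the paper's one-line proof.
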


\begin{proof}
In any feasible solution, at most $k$ disjoint sets are selected, each containing at most $\delta$ elements, hence they can cover at most $k\delta$ distinct universe elements.
\end{proof}

Proposition~\ref{prop:sizebound} can be applied as a \textit{reduction rule} immediately after the family $\mathcal{F}$ has been obtained, i.e., $|U|>k\delta$ implies rejection. This check requires only the values of $|U|$ and $\delta$, both computable in a single linear scan.

\begin{theorem}[FPT in ($k, \delta$)]\label{thm:fpt}
\textsc{$(k,\delta)$-Exact Cover} is solvable in time $O(2^{k\delta} \cdot |\mathcal{F}| \cdot |U|)$, and is, therefore, fixed-parameter tractable when parameterized with respect to the
combined parameter $(k,\delta)$.
\end{theorem}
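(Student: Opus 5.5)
We first apply the size-bound rejection of Proposition~\ref{prop:sizebound}: in $O(|U|+|\mathcal{F}|)$ time we check whether $|U|>k\delta$ and reject if so. From then on we may assume $|U|\le k\delta$. This is the crucial step, because it turns the universe itself into a parameter-bounded object: the power set $2^U$ has at most $2^{k\delta}$ elements. We also discard duplicate and empty sets from $\mathcal{F}$, since neither helps a minimum disjoint cover.

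Next we run a dynamic program over subsets of $U$. Fix an arbitrary linear order on $U$. For each $X\subseteq U$ let $\mathrm{OPT}(X)$ be the minimum number of pairwise disjoint sets from $\mathcal{F}$ whose union is exactly $X$, with $\mathrm{OPT}(X)=\infty$ if no such family exists and $\mathrm{OPT}(\emptyset)=0$. To avoid counting the same cover in many orders, we use the standard trick of branching on the smallest element $x=\min X$. In any exact cover of $X$, exactly one chosen set $F$ contains $x$; moreover $F\subseteq X$, and the remaining sets exactly cover $X\setminus F$. This yields the recurrence
\[
\mathrm{OPT}(X)=1+\min\{\mathrm{OPT}(X\setminus F) : F\in\mathcal{F},\ x\in F\subseteq X\}.
\]
Correctness follows by induction on $|X|$. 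For $\le$, any choice of $F$ together with an exact cover of $X\setminus F$ is disjoint, since that cover lies inside $X\setminus F$, and so it gives an exact cover of $X$. For $\ge$, take an optimal cover of $X$ and remove the unique set containing $x$. The instance is a yes-instance iff $\mathrm{OPT}(U)\le k$, and the witness $\mathcal{C}$ is recovered by storing the minimizing $F$ at each state.

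For the running time, we represent subsets as bitmasks of length $|U|\le k\delta$ and process the states $X$ in order of increasing cardinality. There are at most $2^{k\delta}$ states. For each state we scan all of $\mathcal{F}$, and for each $F$ we test $x\in F\subseteq X$ and compute $X\setminus F$ in $O(|U|)$ time using the explicit set representation; constant-time word operations would be faster but are not needed. This gives $O(2^{k\delta}\cdot|\mathcal{F}|\cdot|U|)$ total time, and the preprocessing is dominated by this bound.

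Since $2^{k\delta}$ depends only on the parameter and the remaining factor is polynomial in the input size, the problem is FPT in $(k,\delta)$. The main subtlety is the role of Proposition~\ref{prop:sizebound}. Without the bound $|U|\le k\delta$, the table would have $2^{|U|}$ states, which is not a function of the parameter. We therefore need to emphasize that the rejection rule is applied before the dynamic program, so the exponent is bounded by $k\delta$ rather than by $|U|$.
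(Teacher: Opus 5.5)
Your proposal is correct and follows essentially the same route as the paper: apply the size-bound rejection of Proposition~\ref{prop:sizebound} to get $|U|\le k\delta$, then run a dynamic program over all $2^{|U|}\le 2^{k\delta}$ subsets $X\subseteq U$ with $|\mathcal{F}|$ transitions of cost $O(|U|)$ each, and accept iff the value at $U$ is at most $k$. The only difference is that you restrict transitions to sets containing $\min X$, whereas the paper allows any $F\subseteq X$; this refinement is harmless but not needed for correctness of a minimization, and your induction proof of the recurrence is more explicit than the paper's justification.
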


\begin{proof}
Let $U$ be the universe of missing taxa. By Proposition~\ref{prop:sizebound}, any instance where $|U| > k\delta$ can be immediately rejected. Thus, we assume $|U| \le k\delta$.

Let $\mathcal{F} \subseteq 2^U$ be the family of candidate leaf sets, with each $F \in \mathcal{F}$ satisfying $|F| \le \delta$.

A dynamic programming approach can be applied \citep{cygan2015parameterized}. We use dynamic programming over all subsets $X \subseteq U$ to store the minimum number of pairwise disjoint sets from $\mathcal{F}$ whose union is exactly $X$.

Define a table $\texttt{DP}[X]$, where $\texttt{DP}[X] = \min \left\{ \texttt{DP}[X \setminus F] + 1:\; F \in \mathcal{F}, F \subseteq X \right\}$, with the base case $\texttt{DP}[\emptyset] = 0$. For all $X \neq \emptyset$, we initialize $\texttt{DP}[X] = +\infty$ to indicate that $X$ is not yet known to be coverable. Each entry represents the smallest number of disjoint sets covering the subset $X$. Since at each transition, we only subtract sets $F \subseteq X$ from the current subset $X$, every element of $U$ is removed (covered) at most once in any sequence of transitions from $U$ to $\emptyset$. Hence, the sets used along such a sequence are pairwise disjoint, and their union is exactly the corresponding subset $X$.

The number of table entries is $2^{|U|} \le 2^{k\delta}$. For each subset $X$ and each $F \in \mathcal{F}$, we check whether $F \subseteq X$ and compute the update. These checks and updates take $O(|U|)$ time per entry. Therefore, the total time is $O(2^{k\delta} \cdot |\mathcal{F}| \cdot |U|)$.

Finally, we check whether $\texttt{DP}[U] \le k$. If this condition holds, there exists a collection of at most $k$ pairwise disjoint sets from $\mathcal{F}$ that exactly covers $U$.

Since the algorithm depends only exponentially on $k$ and $\delta$, the problem is fixed-parameter tractable for the combined parameter $(k, \delta)$.
\end{proof}

The FPT routine can be incorporated into the algorithm for the subtree selection step. At the start of each iteration, we compute $\displaystyle\delta =\max_{F\in\mathcal{F}}|F|$ (maximum size of a candidate subtree) and $k_{\min} =\Bigl\lceil\frac{|U|}{\delta}\Bigr\rceil$ (lower bound on the number of disjoint sets needed to cover $U$). If $U \neq \emptyset$, then $k_{\min}\ge 1$. Let $k_{\max}\in\mathbb{N}$ be a user-chosen upper bound with $k_{\max}\ge k_{\min}$. If the computed parameters satisfy user-chosen thresholds, an FPT routine is applied for increasing values of $k$ from $k_{\min}$ up to $k_{\max}$. For the first value of $k$ in this range for which the FPT routine finds a solution, the resulting subtree set is used, and the greedy step is bypassed. If the parameters exceed the thresholds, or if no solution is found for any $k \in \{k_{\min},\dots,k_{\max}\}$, the greedy rule ($\mathrm{argmax}|L(G)|$) described in Section~\ref{sec:methods} is applied. This strategy ensures that the algorithm always runs in polynomial time on the full instance while providing optimal solutions for instances with small parameter values.

\subsection{Resolving multifurcations} \label{sec:res-multifurc}

The phylogenetic tree set completion algorithm may produce internal nodes with a degree greater than two, meaning that completed trees may include multifurcations. To make the output compatible with phylogenetic methods that assume bifurcating trees, an optional deterministic multifurcation resolution can be applied. This procedure is a refinement of the completed tree in the sense that it resolves multifurcations by inserting only zero-length internal branches, thereby preserving the leaf set and all pairwise leaf-to-leaf distances.

\begin{proposition}[Multifurcation resolution] \label{prop:multifurc-resolution}
    Let $T_i^{\uplus}$ be the completed tree produced for a target tree $T_i$ by the phylogenetic tree set completion algorithm, and assume that all leaf labels are unique. Consider the following multifurcation resolution procedure. For every internal node $v$ of $T_i^{\uplus}$ with more than two children, repeatedly order the children of $v$ by the lexicographically smallest leaf label among their descendant leaves, select the first two children in this order, and replace them by a new internal node $w$ attached to $v$ with a branch length of zero such that the selected children become descendants of $w$ with their original branch lengths preserved. Let $\widetilde{T}_i^{\uplus}$ denote the tree obtained after this procedure has been applied to all nodes of $T_i^{\uplus}$. Then, $\widetilde{T}_i^{\uplus}$ is a rooted phylogenetic tree in which every internal node has at most two children, and it satisfies the following properties. (i) For all leaves $l_a, l_b \in L(\mathcal{T})$, the pairwise distances satisfy $d^{(\widetilde{T}_i^{\uplus})}(l_a, l_b) = d^{(T_i^{\uplus})}(l_a, l_b)$. (ii) $L(\widetilde{T}_i^{\uplus}) = L(T_i^{\uplus}) = L(\mathcal{T})$. (iii) This refinement is idempotent, and $\widetilde{T}_i^{\uplus}$ is a refinement of $T_i^{\uplus}$ obtained only by resolving multifurcations via zero-length internal branches.
\end{proposition}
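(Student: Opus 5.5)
The plan is to analyze one elementary resolution step and then induct on a potential function that measures how far the tree is from being binary. A single step at an internal node $v$ with children $c_1,\dots,c_d$, where $d\ge 3$, orders the children by the lexicographically smallest leaf label below each. It then takes the first two, $c_{(1)}$ and $c_{(2)}$, creates a new node $w$ with branch $(v,w)$ of length $0$, and reattaches the two children below $w$ with their original branch lengths. After the step, $v$ has $d-1$ children and $w$ has exactly two. The steps below are carried out in order.

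\textbf{Termination and binarity.} Consider the potential $\Phi(T)=\sum_{v}\max(0,\#\mathrm{children}(v)-2)$, summed over internal nodes. One step lowers the contribution of $v$ by one and adds a node $w$ with contribution $0$. No other child counts change. Hence $\Phi$ strictly decreases, and the procedure halts after at most $\Phi(T_i^{\uplus})$ steps. When it halts, $\Phi=0$, so every internal node has at most two children. The result is still a rooted tree because each step only subdivides the star at $v$ by a new internal vertex; no cycle is created and connectivity is preserved. Since labels are unique and the lexicographic order on minimal labels is therefore a strict total order on the children, each step is uniquely determined. The output is thus deterministic, although the claim does not require this.

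\textbf{Properties (i) and (ii).} Property (ii) holds because no step adds or deletes a leaf; only the new internal node $w$ is added. For (i), fix leaves $l_a,l_b$ and compare their paths before and after one step. If the path avoids $c_{(1)}$ and $c_{(2)}$, or passes between them, it is unchanged except possibly for one of the following.
\begin{itemize}
\item A path that went $c_{(1)}\to v\to c_{(2)}$ now goes $c_{(1)}\to w\to c_{(2)}$, with the same two branch lengths.
\item A path that went $c_{(k)}\to v\to x$, for $k\in\{1,2\}$ and another neighbour $x$ of $v$, now traverses $c_{(k)}\to w\to v\to x$. This only adds the length $|(w,v)|=0$.
\end{itemize}
In every case the path length is unchanged. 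By induction over the steps, all leaf-to-leaf distances are preserved. Combining this with Theorem~\ref{th:correctness}(ii) also shows that distances among the taxa of $T_i$ are preserved in $\widetilde{T}_i^{\uplus}$.

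\textbf{Property (iii).} Idempotence follows because the procedure acts only on nodes with more than two children, and $\widetilde{T}_i^{\uplus}$ has none; applying the procedure again therefore performs no step and returns the same tree. For the refinement claim, contracting all newly created zero-length branches $(v,w)$ in reverse order undoes each step exactly and returns $T_i^{\uplus}$. Equivalently, every cluster of $T_i^{\uplus}$ is still a cluster of $\widetilde{T}_i^{\uplus}$, since each step only adds the cluster $L(c_{(1)})\cup L(c_{(2)})$. So $\widetilde{T}_i^{\uplus}$ refines $T_i^{\uplus}$ using only zero-length internal branches.

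The main point requiring care is the bookkeeping for paths through $v$ in (i). The only subtlety is the case where a leaf under $c_{(k)}$ is paired with a leaf outside the subtree of $v$, and this is handled by the zero-length branch $(w,v)$.
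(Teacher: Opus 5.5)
Your proposal is correct and follows essentially the same route as the paper. Both analyze a single zero-length insertion step at $v$, show that path lengths are unchanged, note that leaves are never added or removed, and derive idempotence from the strict ``more than two children'' condition. Your potential function $\Phi$ and your explicit handling of the $c_{(1)}$--$c_{(2)}$ path make the termination and distance arguments slightly more careful than the paper's version; this matters because that path no longer passes through $v$ after the step, whereas the paper's case analysis assumes it does.
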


\begin{proof}
The multifurcation resolution procedure operates locally on each internal node $v$ of $T_i^{\uplus}$. Suppose $v$ has children $c_1, \dots, c_{\deg}$ with $\deg > 2$. A single refinement step selects two children deterministically according to the lexicographic order of the smallest leaf label among the descendant leaves of each child subtree and introduces a new internal node $w$ as a child of $v$ with a branch length of zero. The chosen children, $c_1$ and $c_2$, are detached from $v$ and reattached as children of $w$ with the same branch lengths as before.

After this transformation, the path length from $v$ to each of $c_1$ and $c_2$ is $d(v, w) + d(w, c_j) = 0 + d(v, c_j) = d(v, c_j), \; j \in \{1,2\}$, and the path lengths from $v$ to all other children of $v$ remain unchanged. Hence, the distance from the root of $T_i^{\uplus}$ to every leaf is preserved by this operation. A sequence of such multifurcation resolution steps at nodes with more than two children, therefore, preserves all root to leaf distances in $T_i^{\uplus}$.

In order to show that all pairwise leaf distances are preserved, fix any two leaves $l_a$ and $l_b$ and consider their unique path in $T_i^{\uplus}$. Each refinement step only modifies the local structure at some node $v$ by inserting a new node $w$ on a zero-length branch and reattaching two child subtrees beneath $w$ without changing any existing branch lengths. Any segment of the path between $l_a$ and $l_b$ that does not traverse $v$ is unchanged. If the path traverses $v$ and reaches one of the reattached child subtrees, then the segment $v \to c_j$ is replaced by $v \to w \to c_j$ with $d(v,w)=0$ and $d(w,c_j)=d(v,c_j)$. Therefore, the total path length is unchanged. Since each refinement step preserves the length of the path between $l_a$ and $l_b$, the refined tree satisfies $d^{(\widetilde{T}_i^{\uplus})}(l_a, l_b) = d^{(T_i^{\uplus})}(l_a, l_b)$ for all leaf pairs, thereby establishing the statement (i).

The refinement only introduces internal nodes and never deletes or relabels leaves, therefore, the leaf set is preserved, supporting the statement (ii). The multifurcation resolution continues at a node $v$ until $v$ has at most two children. After processing all nodes of $T_i^{\uplus}$ in this manner, every internal node in $\widetilde{T}_i^{\uplus}$ has at most two children. The resulting tree $\widetilde{T}_i^{\uplus}$ is thus rooted and satisfies $L(\widetilde{T}_i^{\uplus}) = L(\mathcal{T})$.

Idempotence follows from the fact that the refinement condition at a node requires strictly more than two children. Once the procedure terminates, every node in $\widetilde{T}_i^{\uplus}$ has at most two children, and thus no further refinement steps are applicable. Moreover, each refinement step modifies the tree only by inserting a zero-length internal branch and reattaching two existing child subtrees beneath the inserted node while keeping all original branch lengths. Thus, $\widetilde{T}_i^{\uplus}$ is a refinement of $T_i^{\uplus}$ obtained only by resolving multifurcations. Finally, the deterministic child pair selection rule ensures that at every step the same pair is chosen given the same leaf labels and tree structure, therefore, the refined tree $\widetilde{T}_i^{\uplus}$ is uniquely determined by $T_i^{\uplus}$.
\end{proof}

Theorem~\ref{th:correctness} and Theorem~\ref{th:unique} imply that each target tree $T_i$ has a unique completed tree $T_i^{\uplus}$. Proposition~\ref{prop:multifurc-resolution} shows that the optional multifurcation resolution step produces a uniquely defined tree $\widetilde{T}_i^{\uplus}$ with no multifurcations, the same leaf set, and identical pairwise distances. The overall completion procedure, therefore, remains deterministic both in the multifurcating and in the refined variants.

\section{Experimental evaluation}\label{sec:experiments}

\subsection{Datasets}

Four species groups, including amphibians, mammals, sharks, and squamates, are utilized in the evaluation. The biological data used to construct overlapping tree sets are obtained from the VertLife resource \citep{upham2019inferring}, using published comprehensive phylogenetic trees of amphibians \citep{jetz2018interplay}, mammals \citep{upham2019inferring}, sharks \citep{stein2018global}, and squamates \citep{tonini2016fully}. For each species group, a list of all species is compiled from the available taxonomy, and a random sampling procedure is applied to generate 20 base taxon subsets with increasing numbers of species, ranging from 50 to 145 in increments of 5. This step produces variation in subset size and taxon composition, which increases diversity in the resulting trees. Each base taxon subset is then used to prune the broad-coverage VertLife phylogenetic tree for the given group to produce a corresponding reference tree while preserving branch lengths.

From each reference tree, an overlapping tree set is created, consisting of 30 trees obtained by pruning to partially intersecting taxa sets derived from the corresponding base taxon subset. The taxon subsets are generated under explicit overlap constraints, using Jaccard overlap in the range 0.30--0.70 and enforcing full coverage of the base taxa subset across the set. Each subset additionally includes a fixed set of 10 anchor taxa to provide a stable shared basis across trees. The pruning procedure removes leaves but does not modify topologies or branch lengths during pruning.

In order to incorporate controlled noise among input trees, additional modifications are applied after pruning. Topological variation is introduced by performing three nearest-neighbor interchange (NNI) moves per tree, while protecting anchor taxa from being moved. Branch lengths are modified by multiplying all branches by a global scaling factor sampled uniformly from 1.003 to 1.009.

The union of taxa within each overlapping tree set defines the completion target for subsequent experiments, and the unpruned reference tree serves as the ground truth for that target. This design establishes a known target tree for each dataset to evaluate completion under taxon incompleteness and controlled discordance in both topology and branch lengths across the input trees. Consequently, any discrepancy between a completed tree and its reference arises from how the completion method combines overlapping, taxonomically incomplete subtrees with minor topological and branch-length modifications. All prepared datasets with more detailed descriptions and scripts used in this evaluation are openly available in the project GitHub repository.

\subsection{Baseline methods} 
In order to evaluate performance and to isolate the contribution of individual design choices (maximal subtree grouping, distance-based attachment, and the search over continuous attachment points), four baseline completion strategies are compared to the proposed phylogenetic tree set completion algorithm.

(i) \textit{Root Attach} method (naive baseline). Each consensus MCS is attached directly at the root of the target tree. This baseline tests whether improvements are due to non-trivial placement rather than to subtree identification alone.

(ii) \textit{Nearest Parent} method (discrete placement baseline). Each consensus MCS is attached to the parent of a selected common leaf in the target tree, where the selected leaf is the one that minimizes the objective function. This baseline is more informative than the Root Attach strategy because it places subtrees at variable locations that depend on the target tree (rather than always attaching at the same fixed point). This method tests whether a simple local placement rule already performs well. Compared to the proposed method, it restricts attachment points to a discrete set (parents of common leaves) instead of optimizing over all points along all original branches.

(iii) \textit{No MCS} method (single leaf insertion baseline). Missing taxa are inserted one by one by treating each missing leaf as its own completion subtree and placing it using the same objective function. This isolates the effect of grouping missing taxa into larger maximal completion subtrees and inserting them jointly.

(iv) \textit{Supertree-based} method (global majority-rule supertree baseline). A majority-rule supertree $T^*$ on the full taxon set $L(\mathcal{T})$ is constructed once from the entire input tree set and is then used as a single source tree to complete all target trees. In this evaluation, $T^*$ is constructed as a majority-rule (+) supertree using PluMiST MR(+) \citep{kupczok2011split}, and the resulting $T^*$ is fixed for the corresponding tree set. For a target tree $T_i$, completion is performed by pairwise tree completion with $T^*$, inserting the missing taxa $L(\mathcal{T}) \setminus L(T_i)$ into $T_i$ using only information from $T^*$. Branch lengths on $T^*$ are assigned by split-wise averaging over the input trees that contain the corresponding split.

Baselines (i)--(iii) are controlled ablations. Methods (i)--(ii) use the same extracted maximal completion subtrees and the same consensus construction as the proposed method but replace the subtree insertion rule. Method (iii) disables MCS grouping but uses the same insertion strategy as the proposed method. This makes it possible to attribute performance differences to the insertion strategy or the distinct leaf grouping step. The supertree-based method is the strongest baseline since it uses more comprehensive information about trees in a tree set.

\subsection{Distance measures}

Two distance measures, one topology-based and one branch length-based, are computed for each tree in the completed tree sets against its reference tree. In each such comparison, the two trees are defined on the same leaf set. Denote this leaf set by $L$. The Robinson-Foulds (RF) distance \citep{robinson1981comparison} measures topological dissimilarity by comparing the sets of non-trivial clades induced by internal nodes. The normalized RF distance is calculated by dividing RF by its maximum possible value. This distance can become less informative if trees contain multifurcations, because unresolved nodes induce fewer non-trivial clades and can artificially reduce RF. For this reason, the refined variant of the tree completion output (Proposition~\ref{prop:multifurc-resolution}) is used for distance evaluation, where multifurcations are deterministically resolved by inserting only zero-length internal branches.

Branch Score Distance (BSD) \citep{steel1993distributions,koshkarov2024novel} quantifies differences between two trees in terms of their pairwise leaf-to-leaf distances. Fix a consistent ordering of unordered leaf pairs $\{\{l_a,l_b\}: l_a,l_b\in L,\ a<b\}$ (e.g., induced by the lexicographic order of taxon labels). Let $\mathbf{d}_1$ be the vector whose entries are the distances $d^{(T_1)}(l_a,l_b)$ in $T_1$ in that order, and let $\mathbf{d}_2$ be defined analogously for $T_2$ using the same ordering of leaf pairs. Then, $\mathrm{BSD}(T_1,T_2)=\lVert \mathbf{d}_1 - \mathbf{d}_2\rVert_2$.

To make values comparable across subsets with different overall distance scales, BSD is normalized using the Euclidean norm of the element-wise maximum of the two distance vectors. The element-wise maximum is the vector whose $i$-th entry equals $\max\{(\mathbf{d}_1)_i,(\mathbf{d}_2)_i\}$. The normalized BSD is defined as follows (Equation~\ref{eq:bsd_norm}).

\begin{equation} \label{eq:bsd_norm}
\mathrm{BSD}_\text{norm}(T_1,T_2)=
\frac{\lVert \mathbf{d}_1 - \mathbf{d}_2\rVert_2}
{\lVert \max(\mathbf{d}_1,\mathbf{d}_2)\rVert_2}.
\end{equation}

Since pairwise leaf-to-leaf distances are strictly positive, $\lVert \max(\mathbf{d}_1,\mathbf{d}_2)\rVert_2>0$, the normalization in Equation~\ref{eq:bsd_norm} is well-defined. This normalization is bounded between 0 and 1 because the entry-wise absolute difference cannot exceed the entry-wise maximum. It is scale-invariant since multiplying all branch lengths in both trees by a common constant scales the numerator and denominator by the same factor. Since Proposition~\ref{prop:multifurc-resolution} preserves all pairwise leaf-to-leaf distances, BSD and $\mathrm{BSD}_\text{norm}$ remain unaffected by multifurcation resolution.

For every subset, all methods complete the same input sets, and each completed tree is compared to the subset reference tree to obtain both distances. To aggregate results within each subset, the median over completed trees is computed for each method and each distance. Using the median provides a robust summary across tree sets and gives one value per method per subset, which prevents subsets with larger taxa sets from dominating the analysis.

\subsection{Evaluation results}

\subsubsection{Comparative evaluation of completion methods}

Figures~\ref{fig:rf_norm}--\ref{fig:bsd_norm} summarize completion accuracy across the four datasets using subset-level medians of the normalized RF distance and normalized BSD. Each panel corresponds to one species group, including (a) Amphibians, (b) Mammals, (c) Sharks, and (d) Squamates. Across all species groups and both metrics, the proposed phylogenetic tree set completion method achieves the best overall performance, with smaller median distances than all baselines. The supertree-based baseline is consistently the closest competitor, whereas the three ablation baselines (Root Attach, Nearest Parent, and No MCS) perform worse.

\begin{figure}[h]
    \centering
    \includegraphics[width=1\linewidth]{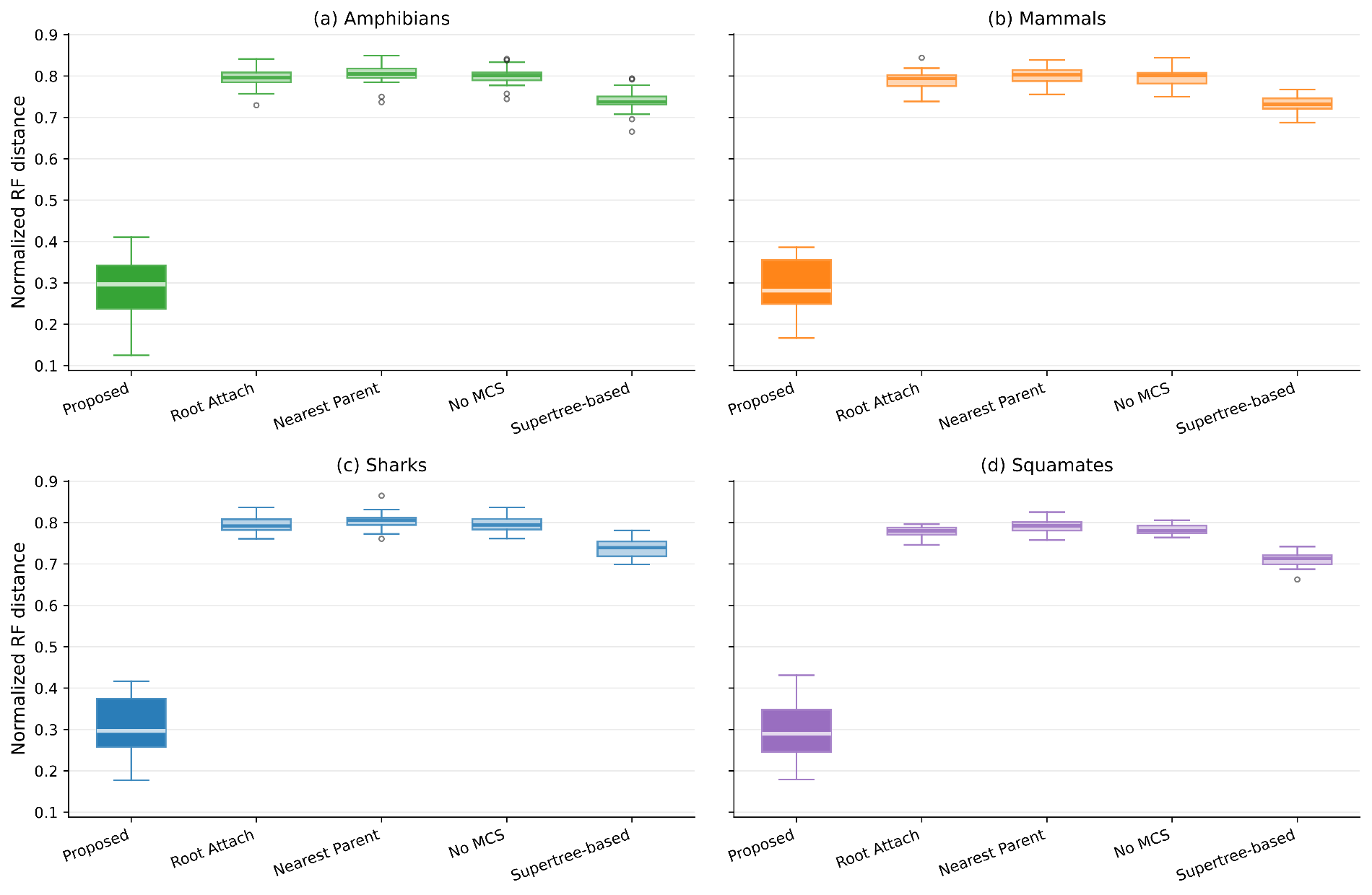}
    \caption{Normalized RF distance for completed trees relative to the subset reference tree. Each panel corresponds to one dataset, with (a) Amphibians, (b) Mammals, (c) Sharks, and (d) Squamates. For each subset and each method, the normalized RF distance is computed for every completed tree and then aggregated using the median across the 30 completed trees, resulting in one value per subset.  Boxplots summarize these 20 subset medians per method. Lower values indicate better topological concordance with the reference tree. The proposed method consistently outperforms all baseline methods across all species groups. This is also shown by a clear gap in the boxplots.}
    \label{fig:rf_norm}
\end{figure}

Based on normalized RF distances, shown in Figure~\ref{fig:rf_norm}, the proposed method substantially reduces topological dissimilarity relative to all baselines. Across the 80 subsets, the overall median normalized RF equals 0.290 for the proposed method, compared to 0.733 for the supertree-based baseline, meaning that the median normalized RF is approximately 60\% lower for the proposed method. The ablation baselines remain substantially worse, with overall medians approximately in the range 0.792 to 0.801.

The advantages of the proposed method are not limited to the central tendency. For every species group, the worst-case subset median normalized RF of the proposed method remains below the best case subset median normalized RF of the supertree-based baseline. Amphibians show a maximum of 0.411 for the proposed method versus a minimum of 0.665 for the supertree-based baseline. Mammals show 0.386 versus 0.688. Sharks show 0.417 versus 0.699. Squamates show 0.431 versus 0.663. This separation produces a clear visual gap in the boxplots and indicates that the proposed method dominates the baselines in topology at the subset level and attains the lowest normalized RF in all subsets.

Among the ablation methods, the Root Attach method is the best performing baseline on normalized RF, but it remains far from the proposed method, with median normalized RF values of 0.794 in mammals and 0.792 in sharks. The No MCS method is consistently slightly worse than the Root Attach method, with median normalized RF values ranging from 0.781 in squamates to 0.802 in mammals. The Nearest Parent method performs worst among the ablations on normalized RF, with median values around 0.792 to 0.805 across groups. Root Attach and Nearest Parent use the same extracted completion subtrees and the same consensus construction as the proposed method but replace the insertion rule. Their high normalized RF values indicate that subtree identification alone is insufficient and that the main topological gains come from optimizing the attachment location along the original branches. Disabling MCS grouping in No MCS increases normalized RF relative to Root Attach, which supports the benefit of inserting missing taxa jointly as maximal completion subtrees rather than as single leaves.

The normalized BSD results (Figure~\ref{fig:bsd_norm}) follow the same overall ranking while showing partial overlap between methods, which is expected given that BSD is sensitive to branch length scaling and local distortions. Overall, the proposed method attains the smallest normalized BSD, with a median of 0.257 over all subsets, compared to 0.317 for the supertree-based baseline, i.e., the median normalized BSD is about 19\% lower for the proposed method. The ablation baselines again remain worse, with overall medians approximately in the range 0.378 to 0.381.

\begin{figure}[h]
    \centering
    \includegraphics[width=1\linewidth]{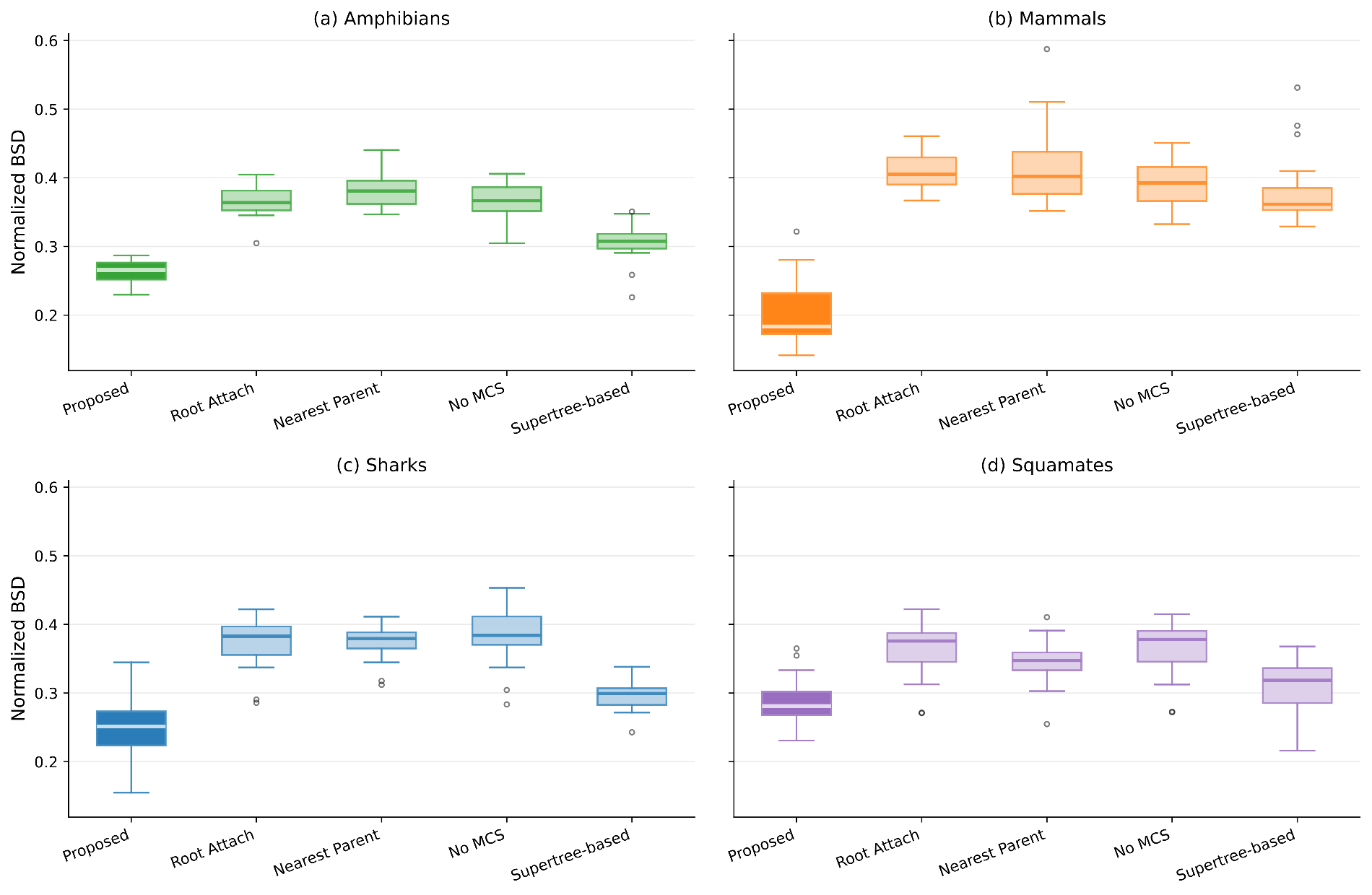}
    \caption{Normalized BSD for completed trees relative to the subset reference tree.  Each panel corresponds to one dataset, including  (a) Amphibians, (b) Mammals, (c) Sharks, and (d) Squamates. For each subset and each method, the normalized BSD is computed for the completed and reference trees, and then aggregated using the median across the 30 completed trees to obtain one value per subset. Boxplots summarize these 20 subset medians per method. Lower values indicate greater pairwise distance consistency between the completed and reference trees. Overall, the proposed method achieves the lowest normalized BSD.}
    \label{fig:bsd_norm}
\end{figure}

The supertree-based method remains the strongest baseline and the closest competitor. Across all 80 subsets, the proposed method has a median normalized BSD of 0.257, with an interquartile range of 0.226 to 0.281, while the supertree-based baseline has a median of 0.317, with an interquartile range of 0.296 to 0.348. The difference is largest in mammals, where the median normalized BSD declines from 0.362 to 0.183, and smallest in squamates, where the median declines from 0.318 to 0.281. Mammals also show particularly clear visual separation in BSD, with minimal overlap between the proposed method and the baselines (see Figure~\ref{fig:bsd_norm} (b)).

Ablation baselines remain worse than the proposed method on the normalized BSD. In mammals, the best ablation baseline, No MCS at 0.393, remains far above the proposed method at 0.183. This supports the fact that the attachment strategy combined with MCS grouping improves not only the tree topology but also the pairwise path-length structure, measured by the normalized BSD.

Consistency across subsets is additionally assessed using paired comparisons since each subset is completed by every method. A one-sided Wilcoxon signed-rank test is applied across the 20 subsets in each group, using aligned subset indices, to test the alternative hypothesis that the proposed method achieves smaller distances than each baseline method. Statistical significance is assessed at $\alpha=0.05$, with $p$-values adjusted using the Holm correction to account for the four baseline comparisons per group and metric. Several comparisons yield identical Holm-corrected $p$-values because the proposed method attains smaller subset medians in all 20 paired subsets, which gives the minimum attainable exact one-sided Wilcoxon $p$-value $2^{-20}$ and $4\cdot 2^{-20}$ after Holm correction across four baselines.

In particular, for normalized RF, the proposed method is significantly better than every baseline in all groups (Holm-corrected $p\text{-value}=3.82\times 10^{-6}$ for each comparison), consistent with the proposed method achieving the lowest subset median normalized RF in every subset. For normalized BSD, the proposed method is significantly better than the supertree-based baseline in all four groups (Holm-corrected $p\text{-value}=3.82\times 10^{-6}$ for amphibians, $3.82\times 10^{-6}$ for mammals, $1.81\times 10^{-5}$ for sharks, and $3.79\times 10^{-2}$ for squamates). Relative to the ablation baselines, the proposed method also shows significantly smaller normalized BSD in all groups (Holm-corrected $p\text{-value}=3.82\times 10^{-6}$ for each comparison).

Overall, the proposed phylogenetic tree set completion method produces completed tree sets that are closer to the reference trees than all baseline methods, with particularly large gains in topology, as measured by the normalized RF. The supertree-based approach is consistently the second best method, which suggests that aggregating global phylogenetic signals is useful, but using a single fixed supertree to complete all targets cannot adapt well to individual target trees and subsets. The ablation methods show that attachment strategy is crucial since simple attachment rules lead to near maximal topological dissimilarity, and that grouping missing taxa into maximal completion subtrees is beneficial, since disabling it increases the normalized RF. Together, these results support combining MCS grouping with objective-driven attachment to improve completion in both topology and branch lengths.

\subsubsection{Subset size trends analysis}

This analysis examines how completion accuracy varies as subset size increases from 50 to 145 taxa. Figure~\ref{fig:trend_subset_size} reports trends separately for each species group. For a fixed group and method, each line point is the subset-level median distance over the 30 completed target trees in that subset, and the shaded band shows the interquartile range (IQR) of the corresponding tree distances within the same subset.

For the median normalized RF (Figure~\ref{fig:trend_subset_size}(a)), the proposed method shows an overall increase with subset size, with non-monotonic fluctuations. In addition to the group-specific lines shown in Figure~\ref{fig:trend_subset_size}, a cross-group summary is computed by taking, at each subset size, the median of the four group-specific subset medians for the proposed method. Under this cross-group summary, the median increases from about 0.172 at 50 taxa to about 0.388 at 145 taxa, and it attains a local minimum near 80 taxa at about 0.199.

\begin{figure}[!t]
    \centering
    \includegraphics[width=1\linewidth]{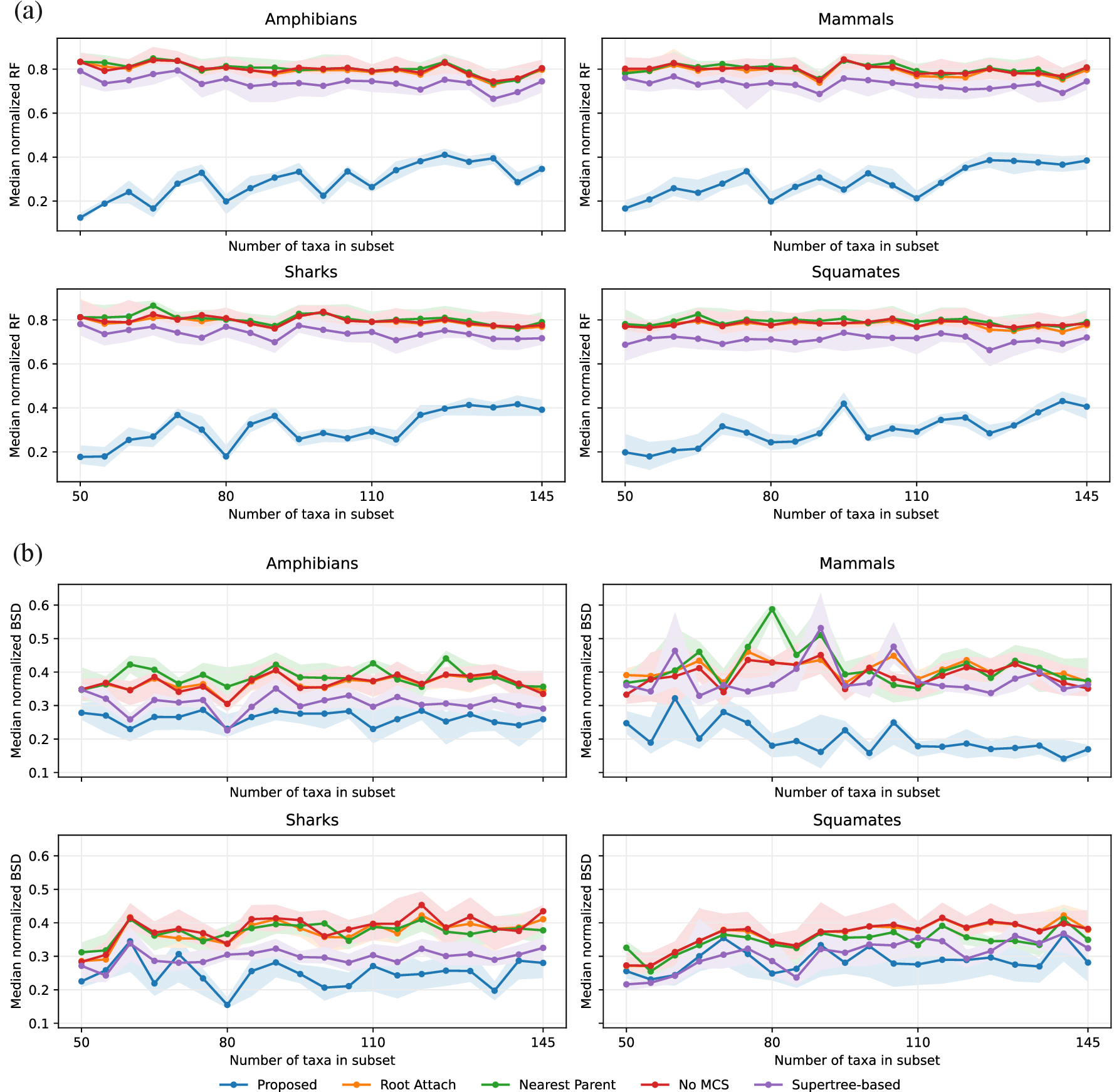}
    \caption{Subset size trends for completion accuracy across the four species groups. Panel (a) shows median normalized RF and panel (b) shows median normalized BSD. Within each panel, the four subplots correspond to Amphibians, Mammals, Sharks, and Squamates. For each method and each subset size, the line shows the subset-level median distance computed over the 30 completed target trees in that subset. Lower values indicate closer fit to the subset reference tree. Shaded bands show the interquartile range, defined as the 25th and 75th percentiles of the corresponding distances over the same 30 completed trees. Subset sizes range from 50 to 145 taxa in increments of 5.}
    \label{fig:trend_subset_size}
\end{figure}

For the median normalized BSD (Figure~\ref{fig:trend_subset_size}(b)), the proposed method is comparatively stable under the same cross-group summary. The median changes only slightly from about 0.252 at 50 taxa to about 0.270 at 145 taxa, with a local minimum near 80 taxa at about 0.205. At the level of individual species groups, amphibians and mammals tend to show decreasing BSD with increasing subset size, whereas sharks and squamates tend to show increasing BSD. Since these trends move in opposite directions across groups, the cross-group summary changes only modestly with subset size because increases in some groups are offset by decreases in others.

The IQR bands further characterize within-subset variability across the 30 completed target trees. For the median normalized RF under the proposed method, the median IQR width across groups is essentially unchanged between the smallest and largest subsets, about 0.070 at 50 taxa versus about 0.069 at 145 taxa, indicating broadly consistent within-subset dispersion across the size range. For the median normalized BSD, within-subset variability is more heterogeneous across sizes and groups, and the median IQR width increases modestly from about 0.050 at 50 taxa to about 0.078 at 145 taxa.

Overall, the proposed method exhibits stable behavior for median normalized BSD across subset sizes under the cross-group summary and a gradual increase in median normalized RF as subset size grows, while maintaining consistent within-subset variability for median normalized RF and moderate variability for median normalized BSD.

\subsubsection{Majority-rule consensus evaluation}

This analysis assesses how well each method preserves the global phylogenetic signal of its completed tree set, which in this context refers to both the dominant topological structure and the associated branch lengths. Preservation is quantified by the distance between a single representative tree (consensus or supertree) and the subset reference tree under normalized RF and normalized BSD.

For every subset and method, all completed trees produced for that subset are used to construct a rooted majority-rule consensus tree. Branch lengths are assigned on the consensus tree by split-wise averaging over the completed trees, using the mean length associated with each retained split among the trees that contain it. The resulting consensus tree is then compared to the subset reference tree using both normalized RF distance and normalized BSD. In parallel, the majority-rule supertree (MR(+)) used in the supertree-based baseline (one per tree set) is compared to the same reference tree under the same two metrics. Results are summarized as heatmaps over subsets (Figure~\ref{fig:heatmaps}).

\begin{figure}[h]
    \centering
    \includegraphics[width=1\linewidth]{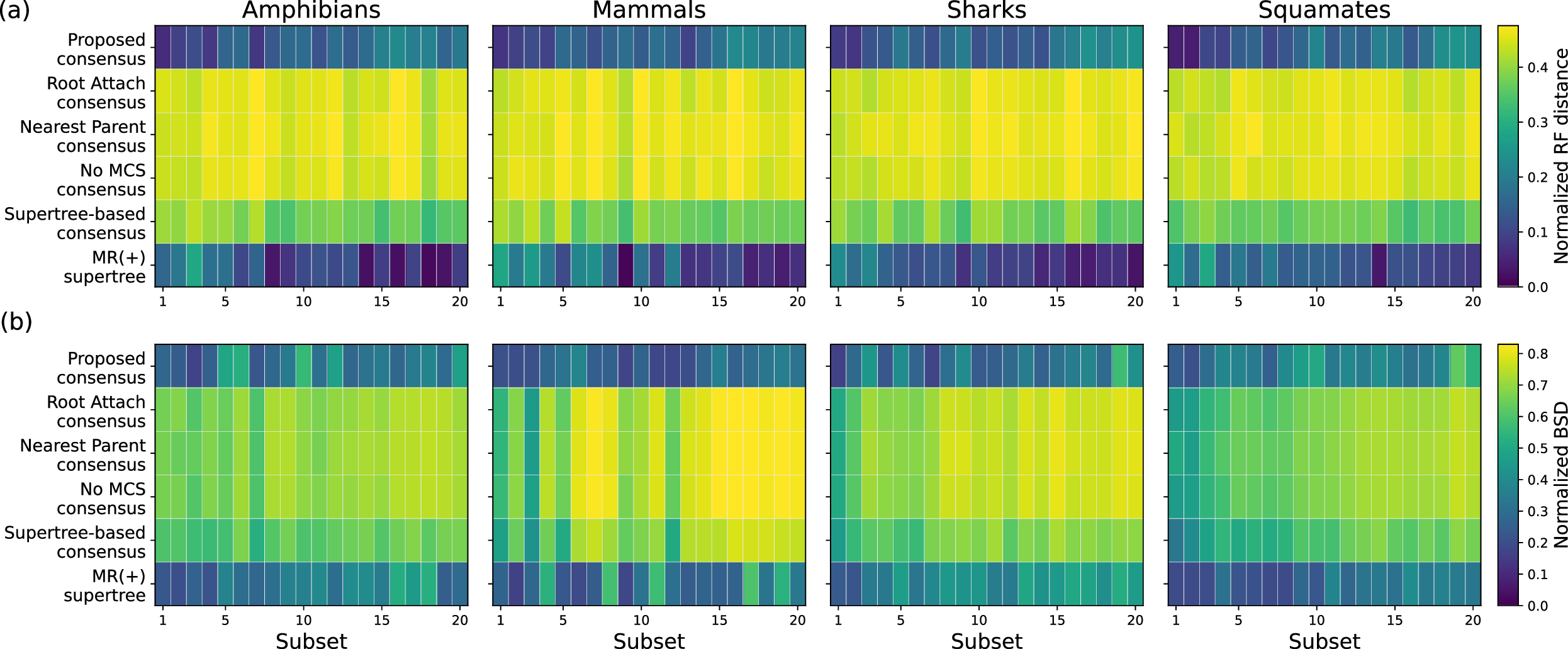}
    \caption{Heatmaps of distances across subsets for each species group. Columns correspond to Amphibians, Mammals, Sharks, and Squamates. Part (a) shows normalized RF distances and part (b) shows normalized BSD, each computed between the subset reference tree and a majority-rule consensus tree for the corresponding method (proposed method consensus and four baseline method consensuses), and the MR(+) supertree. Lower values (darker hues) indicate closer correspondence to the reference. Overall, the MR(+) supertree and the proposed method consensus attain the smallest distances.}
    \label{fig:heatmaps}
\end{figure}

Across all subsets, the two lowest-distance approaches are the MR(+) supertrees and the consensus derived from trees completed by the proposed method. MR(+) supertrees tend to produce smaller normalized RF distances, whereas the consensus from the proposed method tends to achieve smaller normalized BSD.

For normalized RF (Figure~\ref{fig:heatmaps}(a)), the MR(+) supertrees have a median of 0.117 with an IQR of 0.075--0.160, and the proposed method consensus has a median of 0.153 with an IQR of 0.119--0.192. For normalized BSD (Figure~\ref{fig:heatmaps}(b)), the proposed method consensus has a median of 0.310 with an IQR of 0.253--0.377, and the MR(+) supertrees have a median of 0.327 with an IQR of 0.257--0.406. All remaining consensus constructions show noticeably larger distances to the reference. In particular, the consensus built from the supertree-based completion has a median normalized RF of 0.373 and a median normalized BSD of 0.644. The other baselines demonstrate higher values, with median normalized RF between 0.452 and 0.457 and median normalized BSD between 0.713 and 0.720.

The proposed method consensus and MR(+) supertrees are therefore close overall, with a difference between their medians of 0.035 in normalized RF (favoring MR(+)) and 0.016 in normalized BSD (favoring the proposed consensus). Examining these data subset by subset, the median absolute gap is 0.094 for normalized RF and 0.091 for normalized BSD. Approximately 75\% of subsets have an absolute gap of at most 0.144 (RF) and 0.149 (BSD). Considering both metrics jointly, the proposed consensus is within 0.15 of the MR(+) supertree in 61\% of subsets, and within 0.20 in 80\% of subsets.

This pattern is also observed across species groups. For normalized RF, the proposed method consensus medians range from 0.144 to 0.160 across groups, compared to 0.095 to 0.141 for MR(+), indicating consistently small but generally lower RF distances for MR(+). For normalized BSD, the proposed method consensus attains low medians in amphibians (0.311), mammals (0.263), and sharks (0.297), compared to 0.325, 0.273, and 0.423 for MR(+), while in squamates MR(+) is smaller (0.293 versus 0.366).

Based on these results, the proposed method outputs completed tree sets with small topological and branch-length discrepancies. As an additional evaluation step, a single representative consensus tree is constructed from each completed set. The resulting consensus trees achieve low normalized RF and normalized BSD and remain close to the MR(+) supertree, indicating that completion preserves the global phylogenetic signal, as quantified by small distances to the reference, required for subsequent analyses.

\section{Conclusion}

In this work, we introduced the phylogenetic tree set completion problem and developed a deterministic polynomial-time algorithm for completing overlapping collections of phylogenetic trees. Our approach iteratively selects, constructs, and inserts consensus maximal completion subtrees while preserving the original topology of the target trees and all pairwise distances among the original leaves. This results in a completed set of phylogenetic trees on the full taxon union. Our new approach addresses an important gap in phylogenetic analysis and provides a foundation for integrating incomplete evolutionary information with the preservation of key phylogenetic signals.

The proposed method integrates phylogenetic signal (topology and branch lengths) from overlapping source trees through an overlap-weighted consensus construction and a distance-based objective function for choosing attachment locations. The procedure runs in polynomial time, produces a unique completion for each target tree, and is order-independent with respect to the processing order of target trees. By formulating the optimal coverage step as an exact-cover problem, we established NP-completeness and proved fixed-parameter tractability for the combined parameter $(k,\delta)$. We additionally presented an optional deterministic multifurcation resolution procedure that refines completed trees by inserting only zero-length internal branches. This refinement preserves all leaf-to-leaf distances and maintains the determinism of the overall workflow.

Our empirical evaluation on amphibians, mammals, sharks, and squamates shows that, among completion methods that preserve the target subtree on $L(T_i)$, the completed trees produced by our method are consistently closer to the corresponding subset reference trees than those produced by all baselines, both in topology (normalized RF) and in branch length-based pairwise distances (normalized BSD). In particular, across subsets, our method achieves the lowest normalized RF distance and the lowest normalized BSD, highlighting that jointly selecting supported completion subtrees and optimizing continuous attachment points is critical for accurate set-wide completion.

Our study opens several new directions for future work. First, the integration of the FPT exact-cover routine with frequency thresholds could further reduce the number of insertions. Second, alternative weighting and scaling schemes may improve branch-length calibration. Third, extending the algorithm to unrooted trees would broaden its applicability. Finally, exploring global objectives that coordinate multiple insertions simultaneously may further improve tree set completion accuracy.

\bmhead{Acknowledgements}

The authors would like to thank the Department of Computer Science, University of Sherbrooke, Quebec, Canada, for providing the necessary resources to conduct this research.

\section*{Declarations}

\bmhead{Funding}
This research was funded by the Natural Sciences and Engineering Research Council of Canada - Discovery Grants [\#RGPIN-2022-04322], Canada Graduate Scholarship – Doctoral [\#CGSD-589644-2024], and Fonds de Recherche du Québec - Nature and Technologies [\#326911].

\bmhead{Conflict of Interest}
The authors declare that they have no conflict of interest.

\bmhead{Data availability}
The biological datasets analyzed in this study were derived from the VertLife resource. The prepared datasets used in this evaluation are publicly available in the project repository at: \url{https://github.com/tahiri-lab/overlap-treeset-completion/evaluation/data/}

\bmhead{Code availability}
The open-source Python implementation of the proposed algorithm, together with the scripts used for dataset preparation and evaluation, is publicly available at: \url{https://github.com/tahiri-lab/overlap-treeset-completion/}

%%===========================================================================================%%
%% If you are submitting to one of the Nature Portfolio journals, using the eJP submission   %%
%% system, please include the references within the manuscript file itself. You may do this  %%
%% by copying the reference list from your .bbl file, paste it into the main manuscript .tex %%
%% file, and delete the associated \verb+\bibliography+ commands.                            %%
%%===========================================================================================%%

\bibliography{sn-bibliography}% common bib file
%% if required, the content of .bbl file can be included here once bbl is generated
%%\input sn-article.bbl

\end{document}